\documentclass[12pt]{article}

\textwidth 18.5cm
\textheight 22cm
\oddsidemargin -1cm
\evensidemargin -.1cm
\topmargin -1mm

\usepackage{amsmath,amsfonts,amssymb,amstext}
\usepackage[overload]{empheq}%para subequations con llaves

\usepackage[normal]{subfigure}
\usepackage{epsfig}
\usepackage{graphicx}
\usepackage{enumerate}
%\usepackage{hyperref}
%%%%%%%%%%%%%%%%%%%%%%%%%%%%%%%%%%%%%%%%%%%%%%%%%%%%%%%%
\usepackage{mathtools}
\usepackage{color}
\usepackage[normalem]{ulem}
\usepackage{float}
%%%%%%%%%%%%%%%%%%%%%%%%%%%%%%%%%%%%%%%%%%%%%%%%%%%%%%%%%
\usepackage{amsthm}

\usepackage{ulem}

\usepackage{hyperref}
\definecolor{purple}{rgb}{0.75,0,0.25}
\hypersetup{
  colorlinks   = true, %Colours links instead of ugly boxes
  urlcolor     = blue, %Colour for external hyperlinks
  linkcolor    = blue, %Colour of internal links
  citecolor   = purple %Colour of citations
}

\usepackage{color}

\usepackage[dvipsnames]{xcolor}

\usepackage[titletoc,title]{appendix}

\allowdisplaybreaks

\input epsf.tex

\setlength\marginparwidth{2cm}
\setlength\marginparsep{-0.8cm}
%%%%%%%%%%%%%%%%%%%%%%%%%%%%%%%%%%%%%%%%%%%%

\newcommand{\ds}{\displaystyle}

%%%%%%%%%%%%%%%%%%%%%%%%%%%%%%%%%%%%%%%%%%%%%%%%%%%%%

%\textwidth 15cm
%\textheight 22cm
%\oddsidemargin -.1cm
%\evensidemargin -.1cm
%\topmargin -1mm

\newtheorem{theorem}{Theorem}

\newtheorem{corollary}{Corollary}

\newtheorem{remark}{Remark}

%\def\Q{\vec{\bold Q}}
%\def\Q{\bold Q}

%\def\vu{\vec{V}}
%\def\vg{\vec{g}}

%%%%%%%%%  My defmath %%%%%%%%%%%%%%%%%%%%%%%%

%\def\vu{\boldsymbol{\vec{u}}}
\def\vu{\boldsymbol{u}}

\def\a{\alpha}
\def\b{\beta}
\def\r{\rho}

\def\T{\tau}

\def\g{\nabla}
\def\p{\partial}

\def\dint{\displaystyle\int}
\def\dsum{\displaystyle\sum}

\def\dfrac{\displaystyle\frac}

%definition of \partial Qb
\def\dqh{\delta_h}
\def\dqq{\delta_q}

\def\bd{\boldsymbol}

\def\wtd{\widetilde}

\def\mb{\mbox}

\def\F{\mathcal{F}}

\definecolor{green3}{rgb}{0,0.8,0}
%%%%%%%%%%%%%%%%%%%%%%%%%%%%%%%%%%%%%%%%%%%%%%
%

\linespread{0.9} % reduce linespacing.
                 % added by <W.Hennings@fz-juelich.de>

\title{Shallow Water Moment models for bedload transport problems}
\author{J. Garres-D{\'i}az \thanks{Dpto. Matem{\'a}ticas. Edificio Einstein - Universidad de C\'ordoba, Spain, (jgarres@uco.es, Tomas.Morales@uco.es)},
M.~J. Castro D\'iaz\thanks{Dpto. An\'alisis Matem\'atico. Universidad de M\'alaga, 14071 M\'alaga, Spain, (castro@anamat.cie.uma.es)},
J. Koellermeier\thanks{Department of Computer Science, KU Leuven, Celestijnenlaan 200a, 3001 Leuven, Belgium, (julian.koellermeier@kuleuven.be)}, T. Morales de Luna$^*$ }

\begin{document}
\maketitle
\date{}

\abstract{
	In this work a simple but accurate shallow model for bedload sediment transport is proposed. The model is based on applying the moment approach to the Shallow Water Exner model, making it possible to recover the vertical structure of the flow. This approach allows us to obtain a better approximation of the fluid velocity close to the bottom, which is the relevant velocity for the sediment transport. A general Shallow Water Exner moment model allowing for polynomial velocity profiles of arbitrary order is obtained. A regularization ensures hyperbolicity and easy computation of the eigenvalues. The system is solved by means of an adapted IFCP scheme proposed here. The improvement of this IFCP type scheme is based on the approximation of the eigenvalue associated to the sediment transport.  Numerical tests are presented which deal with large and short time scales. The proposed model allows to obtain the vertical structure of the fluid, which results in a better description on the bedload transport of the sediment layer. 
}

\vspace*{0.3cm}

\noindent
{\bf Keywords}:  Shallow Water Exner model, moment approach, hyperbolic system, finite volume method, sediment transport.

\vspace*{0.4cm}

\bigskip

%\tableofcontents

%%%%%%%%%%%%%%%%%%%%%%%%%%%%%%%%%%%%%%%%%%%%%%%%%%%%%%%%%%%%%%%%%%%

\frenchspacing   % no double spaces after colon
%\newpage
\section{Introduction}

Sediment transport  and the morphological evolution of riverbeds due to the deposition and erosion are an active topic in the study of  fluvial processes. 
The sediment is transported  by the river current as suspended load (finer fractions carried by the flow) and  bedload (coarse fractions which move close to the bottom rolling, jumping and sliding), see \cite{librosed}. 

The study of sediment transport  focuses on understanding the relationship that exists between the movement of water and the movement of sedimentary materials. We are interested here in the so-called bedload transport, which is the type of transport that mainly happens near the bottom. In bedload, the sediment grains roll or slide along the bed. Single grains may even jump over the bed a length proportional to their diameter, losing for instants the contact with the soil, but mainly staying near the bed.  A first approach to model 
bedload transport is to couple the Shallow Water equations with the so-called Exner equation (see \cite{exner:1925}). This equation depends on the empirical definition of the solid transport flux for the bedload transport. Several formulations have been proposed, see for instance \cite{meyerPeter:1948,grass:1981,vanrijn}. This approach has 
been extensively used to describe bedload transport, see 
\cite{caleffi,castrodiaz2008sediment,liu3,canestrelli,murillo2010,juez}, among many others.

The description of these empirical formulae for the solid transport discharge is usually based on the velocity of the fluid, which is given by the Shallow Water model describing the hydrodynamic component. Nevertheless, this velocity corresponds to the averaged value in the water column. One would expect to use the near bed velocity of the fluid for bedload transport (see \cite{fernandez-nieto2017formal} and references there in), however the lack of the vertical profile for the velocity due to the Shallow Water approach makes it impossible. 

In recent years, effort has been made in using more complex shallow type models in order to have a better description of the fluid in the vertical direction. One possible direction is the use of the multilayer approach  \cite{audusse2010multilayer, fernandez-nieto2013multilayer, moralesdeluna2017derivation}. This approach allows us to recover the vertical profile by subdividing the domain along the normal direction in shallow layers, and applying the thin-layer hypothesis within each layer. Thus, the domain is discretized in the vertical direction, leading to a system with $N+1$ equations and unknowns, where $N$ is the number of layers. A drawback of this approach is that many layers should be employed if very complex profiles of velocity have to be recovered, leading to a high computational cost (although much lower than solving the full 3D Navier-Stokes system). In \cite{bonaventura:2018}, an application to bedload transport problem is simulated by using a multilayer model and the Grass formula, which allows the authors to use the velocity near the bottom and not the depth-averaged velocity as in the Shallow Water model. It should be noted that the resulting multilayer model seems to be hyperbolic based on numerical simulations, although this question remains open for arbitrary numbers of layers. Moreover, no analytical explicit general expression for the eigenvalues is known.

An extended Shallow Water model was derived in \cite{kowalski:2018}. The model uses a polynomial expansion of the horizontal velocity along the vertical axis such that complicated velocity profiles can be represented efficiently using an extended set of variables that includes the basis coefficients of the polynomial basis. This approach is called moment method and the resulting model is the Shallow Water Moment model (SWM). Even though the model showed good results for standard test cases, it was shown in \cite{koellermeier:2019} that the hyperbolicity of the SWM model is limited to a bounded domain in phase space. This is a known deficiency of standard moment models, for example, in kinetic theory \cite{Cai:2013,Koellermeier:2014,Koellermeier:2017}. The lack of hyperbolicity under certain flow conditions can lead to instabilities in numerical simulations. This drawback was solved in \cite{koellermeier:2019} by means of a hyperbolic regularization of the SWM model. The resulting Hyperbolic Shallow Water Moment model (HSWM) was proven to be hyperbolic under any flow conditions for arbitrary velocity profiles. Furthermore, the eigenvalues of the HSWM system can be computed analytically, which makes the application of numerical schemes easier. The HSWM model is thus an ideal starting point for the development of an improved sediment transport model that takes into account more complicated vertical velocity profiles.\\

The goal of this work is to propose such a model for bedload sediment transport problems. The model is obtained by applying the moment approach to extend the classical Shallow Water Exner model. After some analysis, the final model is obtained, which is expected to be hyperbolic, at least in the regime where the Shallow Water Exner  model is also hyperbolic. Actually, approximating the eigenvalues of the proposed model is similar to approximating the eigenvalues of the Shallow Water Exner model and the HWSM model, since almost all eigenvalues can be explicitly computed as we rigorously prove. This approach makes it possible to approximate the velocity at the bottom, that is the one used to sweep the sediment, thus improving the sediment transport.  

The paper is organized as follows: Section \ref{se:model} is devoted to presenting the model, where a brief review of the moment approach for shallow flows is also given. In Section \ref{se:num_aprox} the numerical scheme is presented, showing in particular the method to approximate the eigenvalues of the system. Some academic numerical tests are shown in Section \ref{se:tests}, together with a comparison with laboratory experiments. Finally, some conclusions are given in Section \ref{se:conclusions}.

\section{Shallow Water Moment models}\label{se:model}
In this section the initial system and the reference system, as well as the moment approximation framework is introduced to obtain the final Hyperbolic Shallow Water Exner Moment system (denoted HSWEM hereinafter) that we use to simulate bedload sediment transport.
\subsection{Incompressible Navier-Stokes and Exner sediment transport}
 In this work the model will be derived, for the sake of simplicity, in the 2D case, but it can be easily extended to the three dimensional case. Firstly, we consider a $2$D-Cartesian reference system where $(x,z)$ are the horizontal and vertical direction, respectively, with 
 $\vu = (u,w)$ the velocity vector. Then, we start from the 2D incompressible Navier-Stokes system. For a fluid with constant density $\r$,  the system reads
\begin{subequations}\label{eq:NS_ini}
\begin{equation}\label{eq:NS}
\left\{
\begin{array}{l}
\p_{x} u+ \p_{z}w= 0,\\[3mm]
\r\big(\p_{t}u+u\,\p_{x}u +w\,\p_{z}u\big)+\p_{x}p =\p_{x}\sigma_{xx}+\p_{z}\sigma_{xz},\\[3mm]
\r\big(\p_{t}w+u\,\p_{x}w +w\,\p_{z}w\big)+\p_{z}p =-\r g+\p_{x}\sigma_{zx}+\p_{z}\sigma_{zz},
\end{array}
\right.
\end{equation}
with $g$ the gravitational acceleration, $\sigma = \mu \left(\g\vu + \g\vu^T\right)$ the deviatoric tensor (whose components are denoted $\sigma_{xx}, \sigma_{xz},\sigma_{z,x},\sigma_{zz}$), $\mu$ the viscosity coefficient, and $p$ the pressure. 

Following a similar dimensional analysis as carried out in \cite{kowalski:2018} (see also the classical asymptotic analysis for the Shallow Water system \cite{gerbeau:2001}), the only viscous effect retained is the term $\partial_z \sigma_{xz}$. Moreover, a hydrostatic pressure is obtained, which means that the third equation in (\ref{eq:NS}) reduces to
\begin{equation}
p(t,x,z) = \r g \left(b(t,x)+h(t,x) -z\right),
\end{equation}
\end{subequations}
where $h(t,x)$ is the water height and $b(t,x)$ is the bottom, which may evolve in time. Thus, the free surface level is given by $b+h$.

In this paper, we shall assume that the bottom topography evolves due to the interaction between the fluid and the sediment particles that constitute the bottom. More explicitly we consider the bedload transport of these particles by means of the Exner equation \cite{exner:1925} 
\begin{equation}
\label{eq:exner}
\p_t b + \p_x Q_b = 0,
\end{equation}
where $Q_b$ is the solid transport discharge. Note that here we do not consider sediment transport in suspension and we do not include erosion-deposition effects. Only bedload transport is assumed here for the sake of simplicity, since the goal is to improve the vertical description of the velocity, leading to an improvement of the bedload sediment transport. However, erosion and deposition effects may be relevant in some situations as it was shown in \cite{gonzalezAguirre:2020}. Many empirical formulae may be found for erosion and deposition fluxes, which depend in different parameters to be calibrated. This is not the purpose of this paper and it will be studied in future works. 
To close the system, a definition of $Q_b$ must be given, which is usually defined by means of an empirical formula: Grass, Meyer-Peter \& M\"uller, Ashida \& Michiue,... see for instance \cite{meyerPeter:1948,ashida:1972, grass:1981,fernandezNieto:2016}. In this work we consider the Meyer-Peter \& M\"uller formula, which defines the solid transport discharge, in nondimensional form, as
\begin{subequations}
	\label{eq:MPM_full}
\begin{equation}
\label{eq:meyerPeterMuller_a}
\dfrac{Q_b}{Q} = sgn(\tau)\dfrac{8}{1-\varphi} \left(\theta -\theta_c\right)_+^{3/2},
\end{equation}
where $Q = d_s\sqrt{g\left(1/r-1\right)d_s}$ is the characteristic discharge, $(\cdot)_+$ is the positive part and $sgn(\cdot)$ is the sign function. In addition, $\T$ is the shear stress at the bottom, $\varphi$ is the porosity, $r=\r/\r_s$ with $\r_s$ the sediment density and $d_s$ the diameter of the sediment particles. The Shields parameter $\theta$ is defined as 
\begin{equation}\label{eq:meyerPeterMuller_b}
\theta = \dfrac{|\T|d_s^2}{g\left(\r_s-\r\right)d_s^3},
\end{equation}
and $\theta_c$ is the constant critical Shields stress. Note that, without loss of generality, any other formula for $Q_b$ may be used. Finally, for the shear stress at the bottom, the Manning friction law is assumed, leading to 
\begin{equation}\label{eq:meyerPeterMuller_c}
\T = \r g h S_f \quad\mb{with}\quad S_f = \dfrac{n^2 |u_{|_{z=b}}|}{h^{4/3}}u_{|_{z=b}},
\end{equation}
where $n$ is the Manning coefficient.
\end{subequations}

Regarding the boundary conditions, the usual kinematic conditions at the free surface and the bottom are used: 
$$
\p_t(b+h) + u_{|_{z=b+h}}\p_x(b+h) - w_{|_{z=b+h}} = 0,
$$
and 
$$
\p_tb + u_{|_{z=b}}\p_xb - w_{|_{z=b}} = 0.
$$
In addition, it is assumed that the deviatoric tensor vanishes on the free surface and it reduces to the usual friction condition at the bottom, which are written as
$$
\sigma_{xz|_{z=b+h}} =0\quad\mbox{and}\quad\sigma_{xz|_{z=b}} = \T,
$$ with $\T$ defined by \eqref{eq:meyerPeterMuller_c}. In the next subsection, we recall the classical shallow model including the Exner equation.

\subsection{Shallow Water Exner model}
Starting from the incompressible Navier-Stokes equations, the classical shallow water equations can be derived by averaging over the vertical variable assuming constant velocity in the horizontal direction. This is a well-known procedure and leads to the following model for water height $h$ and constant velocity $u_m$:
\begin{equation*}\label{eq:SW}
\left\{
\begin{array}{l}
\partial_t h + \partial_x( h u_m) = 0, \\[3mm]
\partial_t(h u_m) + \partial_x \left(h u_m^2 + \frac{1}{2}g h^2\right) = \frac{1}{\rho} \partial_z \sigma_{xz|_{z=b}} ,
\end{array}
\right.
\end{equation*}
where $\sigma_{xz|_{z=b}}$ is the friction from the contact with the bottom. 

Using the Exner model, and the Manning law for the friction term, the combined Shallow Water Exner (SWE) model reads
\begin{equation*}\label{eq:SWE}
\left\{
\begin{array}{l}
\partial_t h + \partial_x( h u_m) = 0, \\[2mm]
\partial_t(h u_m) + \partial_x \left(h u_m^2 + \frac{1}{2}g h^2\right)  = - \dfrac{g n^2|u_{m}|}{h^{1/3}}u_m, \\[4mm]
\p_t b + \p_x Q_b = 0,
\end{array}
\right.
\end{equation*}
or in conservative variables $\bd{W} = \left(h,hu_m,b\right)^t$, 
\begin{equation}
    \p_t \bd{W} + \bd{A}_{SWE}(\bd{W})\p_x \bd{W} = \bd{E}(\bd{W}),
    \label{e:SWE}
\end{equation}
with transport matrix
\begin{equation}
    \bd{A}_{SWE}(\bd{W}) = \left(\begin{matrix}
    0 & 1 & 0 \\[3mm]
    -u_m^2+gh & 2u & gh\\[3mm]
    \dqh & \dqq & 0
    \end{matrix}\right),
\end{equation}
where the notation $\dqh = \p_{h} Q_b$ and $\dqq = \p_{hu_m} Q_b$ is used. Remark that the Meyer-Peter $\&$ M\"uller formula for the solid transport discharge leads to 
\begin{subequations}
	\label{eq:dqb}
\begin{equation}
\dqq = \p_{hu_m}Q_b = \p_\theta Q_b\,\p_{hu_m}\theta = \dfrac{24 n^2}{\left(1-\varphi\right)\left(1/r-1\right)d_s} \left(\theta-\theta_c\right)^{1/2}_+ \dfrac{u_b}{h^{4/3}},
\end{equation}
and 
\begin{equation}
\dqh = \p_{h}Q_b =-\dfrac76 u_b \dqq.
\end{equation}
\end{subequations}

The right-hand side friction term is consistently written as $\bd{E}(\bd{W}) =  \left(0,- \dfrac{g n^2|u_{m}|}{h^{1/3}}u_m,0\right)^t$.\\

\medskip

Considering the physical properties of the system, it is possible to compute an equation for the propagation speeds via the characteristic polynomial $P_{\bd{A}_{SWE}}$ of the system matrix $\bd{A}_{SWE}(\bd{W})$ 
\begin{equation}\label{e:SWE_ev}
    P_{\bd{A}_{SWE}}(\lambda) = -\lambda \left((\lambda-u_m)^2 -gh \right) + gh (\delta_h +\lambda \delta_q).
\end{equation}

For the SWE system in \eqref{e:SWE}, it is not possible to have an easy explicit expression for the propagation speeds $\lambda,\, s.t.\, P_{\bd{A}_{SWE}}(\lambda) = 0$, but the Cardano's formula could be used to compute them. Nevertheless, as it is shown in \cite{cordier:2011}, the hyperbolicity of the model may be studied in an easy way. For the particular case of Manning's friction law used here, it is shown that the system is hyperbolic for the case of realistic and physical values ($|u_m|< 6 \sqrt{gh}$), although one could find complex eigenvalues otherwise. 
Note that in case of $\dqh=\dqq=0$, i.e. no sediment transport, the system has propagation speeds $\lambda_{1,2} = u_m \pm \sqrt{gh}$, $\lambda_3 = 0$, resembling standard shallow water flow with resting bottom topography. Whenever $\dqh, \dqq$ are sufficiently small, one expects to have three eigenvalues that are close to the ones obtained for the case of shallow water as the characteristic polynomial \eqref{e:SWE_ev} changes continuously with its coefficients.  \\

In the next subsection, we develop the moment approximation to system \eqref{eq:NS_ini}-\eqref{eq:MPM_full}, which is one of the contributions of this work.

\subsection{Shallow Water Exner Moment model}
The Shallow Water Exner model is a significant simplification as it assumes constant horizontal velocity along the vertical $z-$direction, but is often used to compute simple solutions of applications. The main drawback of the standard Shallow Water equations is that vertical variations on the velocity cannot be represented. This is especially important for sediment transport problems, where the bottom velocity leads to the crucial friction between the fluid and sediment. 
In \cite{kowalski:2018}, a new model for shallow flows that allows for changes in the vertical structure of the velocity is derived. Let us remind the reader briefly of this moment approximation technique for shallow flows.\\

\medskip

Firstly, a new variable, based on the $\sigma$-coordinates, is considered,
\begin{equation}
\label{eq:sigmacoord}
\xi = \dfrac{z-b}{h},\quad\mb{where }\xi\in [0,1]\mb{ for } z\in [b,b+h],
\end{equation}
with $\xi = 0$ and $\xi = 1$ corresponding to the bottom $z=b$ and the free surface $z=b+h$, respectively. Note that, denoting by $\wtd{\psi}(t,x,\xi) = \psi(t,x,\xi h +b)$, the differential operators read
$$
\p_\xi\wtd{\psi} = h\p_z\psi
$$
and 
$$
h\p_s\psi = \p_s\left(h\wtd{\psi}\right) - \p_\xi \left(\p_s\left(\xi h + b\right)\wtd{\psi}\right),\quad\mbox{ for } s\in\{t,x\}.
$$
Taking into account this mapping, the Navier-Stokes-Exner system \eqref{eq:NS_ini}-\eqref{eq:MPM_full} is rewritten as
\begin{equation}\label{eq:NS_sigma}
\left\{
\begin{array}{l}
\p_{x}\left(h \wtd{u}\right)+ \p_{\xi}\left(\wtd{w}-\wtd{u}\p_x\left(\xi h +b\right)\right)= 0,\\[3mm]
\p_{t}\left(h\wtd{u}\right)\,+\,\p_{x}\left(h \wtd{u^2}\right) \,+\, gh\p_x\left(b+h\right) + \p_\xi \left(h\omega \wtd{u}\right) = \dfrac{1}{\r}\p_{\xi}\sigma_{xz},\\[3mm]
\p_t b + \p_x Q_b = 0,
\end{array}
\right.
\end{equation}
where a vertical coupling term $\omega$ is defined as
$$
h\omega = -\xi\p_t h - \p_x \left(h\dint_{0}^\xi \wtd{u}\,d\xi\right)
$$
and the boundary conditions are easily written in the variable $\xi$, obtaining
$$
\begin{array}{l}
\p_tb + \wtd{u}_{|_{\xi=0}}\p_xb - \wtd{w}_{|_{\xi=0}} = 0,\\[3mm]
\p_t(b+h) + \wtd{u}_{|_{\xi=1}}\p_x(b+h) - \wtd{w}_{|_{\xi=1}} = 0,\end{array}\qquad\mbox{and}\qquad \begin{array}{l}
\wtd{\sigma}_{xz|_{\xi=0}} = \T,\\[3mm]
\wtd{\sigma}_{xz|_{\xi=1}} =0.
\end{array}
$$

The main idea of the moment approximation is to consider an expansion of the velocity in the vertical variable $\xi$ as
\begin{equation}\label{eq:u_moment}
\wtd{u}(t,x,\xi) = u_m (t,x) \,+\, \dsum_{j=1}^N \alpha_{j}(t,x)\phi_j\left(\xi\right),
\end{equation}
where $u_m$ is the mean velocity, which does not depend on the vertical direction, $\phi_j:[0,1]\longrightarrow \mathbb{R}$ is the scaled Legendre polynomial of degree $j$, and $\alpha_j$, called hereinafter \textit{moment}, is the corresponding basis coefficient. 

In general, $N \in \mathbb{N}$ can be arbitrary. The first basis functions are given by
$$
\phi_0(\xi) = 1,\quad  \phi_1(\xi) = 1-2\xi,\quad \phi_2(\xi) = 1-6\xi+6\xi^2,\quad \phi_3(\xi) = 1-12\xi+30\xi^2-20\xi^3.
$$
Note that these polynomials fulfill $\phi_i(0) = 1$ and $\dint_{0}^1 \phi_i(\xi)\,d\xi = 0$, for $i\ne0$. Then, effectively, the mean velocity is $u_m = \dint_{0}^1 \wtd{u}\,d\xi$. 

Finally, the second equation in \eqref{eq:NS_sigma} is tested (i.e. multiplied) with the same ansatz functions $\phi_j$ as the basis function to get
$$\phi_j \p_{t}\left(h\wtd{u}\right)\,+\,\phi_j\p_{x}\left(h \wtd{u^2}\right) \,+\, \phi_j gh\p_x\left(b+h\right) + \phi_j\p_\xi \left(h\omega \wtd{u}\right) = \phi_j\dfrac{1}{\r}\p_{\xi}\sigma_{xz}, \quad j=0,1,2,...,N,
$$
which are integrated for $\xi\in[0,1]$, to obtain the final Shallow Water Exner Moment (SWEM) system.\\

\medskip
 The only term that is different from previous models is the friction term, as a standard Newton fluid with slip boundary condition at the bottom was used in \cite{kowalski:2018}. The resulting model for the Exner friction term is obtained after the following calculations.

On the one hand, for $j=0$, i.e. the momentum equation, we have that
$$
\dint_{0}^1 \p_{\xi} \wtd{\sigma}_{xz}d\xi = \wtd{\sigma}_{xz|_{\xi=1}}-\wtd{\sigma}_{xz|_{\xi=0}}= - \T = - \dfrac{\r g n^2|u_b|}{h^{1/3}}u_b,
$$
where $u_b = \wtd{u}_{|_\xi=0} = u_m + \dsum_{j=1}^N \a_j$ denotes the velocity at the bottom. On the other hand, for $j\neq 0$, i.e. the additional moment equations, we obtain
$$
\begin{array}{ll}\dint_{0}^1 \phi_i \p_{\xi} \wtd{\sigma}_{xz}d\xi =& \dint_{0}^1 \p_{\xi} \left(\phi_i\wtd{\sigma}_{xz}\right)d\xi -\dint_{0}^1\wtd{\sigma}_{xz} \p_{\xi} \phi_i d\xi =  - \dfrac{\r g n^2|u_b|}{h^{1/3}}u_b - \dfrac{\mu}{h}\dint_{0}^{1} \p_{\xi}\wtd{u}\p_{\xi}\phi_i d\xi\\[4mm]
&= - \dfrac{\r g n^2|u_b|}{h^{1/3}}u_b - \dfrac{\mu}{h}\dsum_{j=1}^N\a_jC_{ij},
\end{array}
$$
where $C_{ij} = \dint_{0}^{1} \p_{\xi}\phi_i\p_{\xi}\phi_j d\xi$. 

Following the derivations above, the general SWEM system with $N+3$ equations and unknowns reads:
\begin{equation}\label{eq:model_SWM}
\left\{
\begin{array}{l}
\p_t h + \p_{x}\left(h u_m\right)= 0,\\[3mm]
\p_{t}\left(h u_m\right)\,+\,\p_{x}\left(h u_m^2 + h\dsum_{j=1}^N \dfrac{\a_j^2}{2j+1}\right) \,+\, gh\p_x\left(b+h\right) = - \dfrac{g n^2|u_{b}|}{h^{1/3}}u_b,\\[3mm]
\p_{t}\left(h \a_i\right)\,+\,\p_{x}\left(h \left(2u_m\a_i + \dsum_{j,k=1}^N A_{ijk} \a_j\a_k\right)\right) = u_m\p_x\left(h\a_i\right) - \dsum_{j,k=1}^N B_{ijk}\a_k \p_x\left(h\a_j\right)\\[3mm]
\qquad - (2i+1)\left(\dfrac{g n^2|u_{b}|}{h^{1/3}}u_b + \dfrac{\nu}{h}\dsum_{j=1}^N C_{ij}\a_j\right),\\[3mm]
\p_t b + \p_x Q_b = 0,
\end{array}
\right.
\end{equation}
where $Q_b$ is defined by \eqref{eq:MPM_full}, and we recall that the bottom velocity is $u_b = u_m + \dsum_{j=1}^N \a_j$. The constant $\nu = \mu/\r$ is the kinematic viscosity, and $A_{ijk},B_{ijk},C_{ij}$ are constant coefficients depending on the Legendre polynomials. Concretely,
$$
\dfrac{A_{ijk}}{2i+1} = \dint_{0}^1 \phi_i\phi_j\phi_k d\xi,\quad \dfrac{B_{ijk}}{2i+1} = \dint_{0}^1 \phi_i' \left(\dint_{0}^\xi \phi_jd\xi\right) \phi_k d\xi,\quad\mbox{and}\quad \dfrac{C_{ij}}{2i+1} = \dint_{0}^1 \phi_i'\phi_j' d\xi.
$$

The previous system can be written in form of a first-order system with non-conservative products and source terms, which are resulting from the friction term. Thus, system \eqref{eq:model_SWM} reads
\begin{equation}
\label{eq:SWEM}
\p_t \bd{W} + \bd{F}(\bd{W}) = \bd{B}(\bd{W})\p_x \bd{W} + \bd{E}(\bd{W}),
\end{equation}
where $\bd{W} = \left(h,hu_m,h\a_1,\dots,h\a_N,b\right)^t$ is the vector of conservative variables, the convective flux is $\bd{F}(\bd{W})=\left(F_1(\bd{W}),F_2(\bd{W}),\dots,F_{N+3}(\bd{W})\right)^t$, with $$\bd{F}_i(\bd{W}) = \left\{\begin{array}{lrl}
hu_m &\quad \mbox{if}&i = 1,\\[3mm]
h u_m^2 \,+\, g\dfrac{h^2}{2}\,+\, h\dsum_{j=1}^N \dfrac{\a_j^2}{2j+1} & \mbox{if}&i = 2,\\[3mm]
h \left(2u_m\a_i \,+\,\dsum_{j,k=1}^N A_{ijk} \a_j\a_k\right) & \mbox{if}&i \in \{3,...,N+2\},\\[3mm]
Q_b & \mbox{if}&i =N+3,\\[3mm]
\end{array}\right.$$
$\bd{B}(\bd{W})$ the non-conservative terms, that will be detailed later, and the source term
$$\bd{E}(\bd{W})= \left\{\begin{array}{lrl}
0 &\quad \mbox{if}&i = 1,N+3,\\[3mm]
-\dfrac{g n^2|u_{b}|}{h^{1/3}}u_b & \mbox{if}&i = 2,\\[3mm]
- (2i+1)\left(\dfrac{g n^2|u_{b}|}{h^{1/3}}u_b + \dfrac{\nu}{h}\dsum_{j=1}^N C_{ij}\a_j\right) & \mbox{if}&i \in \{3,...,N+2\}.\\[3mm]
\end{array}\right.$$

Similarly as for the Shallow Water Moment model without sediment transport in \cite{kowalski:2018}, an investigation of the propagation speeds is very difficult for the model, due to the strong nonlinearity of the system.

\subsection{Hyperbolic Shallow Water Exner Moment model}
For the Shallow Water Moment model without sediment transport in \cite{kowalski:2018}, it has been shown in \cite{koellermeier:2019} that the model is not hyperbolic for for $N>1$. Depending on the flow variables $\bd{W}$, the eigenvalues of the system could become imaginary, which leads to instabilities and oscillations in numerical simulations. However, the case $N>1$ is especially interesting in the case of sediment transport as it allows for a more complex structure of the flow and changes the bottom velocity, such that a more accurate description of the sediment transport can be expected.

The SWEM model is thus not usable for our simulations. However, in \cite{koellermeier:2019} a modified model called Hyperbolic Shallow Water Moment model (HSWM) is introduced to ensure hyperbolicity independent of the value of $\bd{W}$ for the case without sediment transport. We will use the same strategy here to achieve hyperbolicity of the moment part of the system and to allow for an analysis of the propagation speeds. To apply the hyperbolic regularization used in \cite{koellermeier:2019}, we start from the transport part of the previous system in the form of a first-order system with non-conservative products
$$
\p_t \bd{W} + \bd{F}(\bd{W}) = \bd{B}(\bd{W})\p_x \bd{W},
$$ and equivalently rewrite it as 
$$
\p_t \bd{W} + \bd{A}(\bd{W})\p_x \bd{W} = 0
$$
with $\bd{A}=\dfrac{\partial\bd{F}}{\partial\bd{W}} - \bd{B}$. It was already mentioned in \cite{kowalski:2018} that the system matrix $\bd{A}(\bd{W})$ can have imaginary eigenvalues. 
The hyperbolic HSWM model is now deduced by modifying the system matrix $\bd{A}(\bd{W})$ such that the sub-matrix including the first $N+2$ rows and $N+2$ columns has real eigenvalues, i.e., the part concerning the fluid transport and not the sediment transport. We denote the resulting matrix as $\bd{A}_H(\bd{W})$, which is obtained from $\bd{A}(\bd{W})$ by keeping the terms in $\a_1$ and forcing all the high-order moment terms of the submatrix to be zero, i.e., $\a_2=\dots=\a_N = 0$ in the submatrix of the first $N+2$ rows and columns of the system matrix. Even though this seems like a significant change of the model, it was shown in \cite{koellermeier:2019} that the accuracy of the corresponding HSWM model is sufficient, similar to the same strategy used in kinetic theory, see \cite{Cai:2013,Koellermeier:2014,Koellermeier:2017}. The system matrix $\bd{A}_H(\bd{W})$ can be seen as a linearization of the original system matrix around the first-order deviation from equilibrium, which is justified by the fact that the coefficients $\a_2=\dots=\a_N$ are usually small. \\

Following the derivations in \cite{koellermeier:2019}, the system matrix $\bd{A}_H(\bd{W}) \in \mathbb{R}^{(N+3)\times(N+3)}$ of the Hyperbolic Shallow Water Exner Moment system can easily be obtained as
\begin{equation}
    \bd{A}_H(\bd{W})=\left(
    \begin{array}{cc|ccccc|c}
    & 1 &  &  &  &  &  &  \\
     gh-u_m^2-\frac{1}{3}\alpha_1^2& 2 u_m & \frac{2}{3}\alpha_1 &  &  &  & & gh \\ \hline
    -2u_m \alpha_1 & 2\alpha_1 & u_m & \frac{3}{5}\alpha_1  & & & \\
    -\frac{2}{3}\alpha_1^2 &  & \frac{1}{3}\alpha_1 & u_m & \ddots  & & \\
    &  &  & \ddots & \ddots  & \frac{N+1}{2N+1}\alpha_1 & \\
    &  &  &  & \frac{N-1}{2N-1}\alpha_1 & u_m &  &\\ \hline
    \dqh & \dqq &\dqq  & \hdots & \dqq &  \dqq &  &\\
    \end{array}
    \right),
    \label{e:HSWEM}
\end{equation}
where again $\dqh = \p_{h} Q_b$ and $\dqq = \p_{hu_m} Q_b$. Note that the bottom row contains the same values $\dqq$ in the different $N+1$ entries due to the simple evaluation of the bottom velocity. The hyperbolic regularization allows to write the system matrix explicitly. The system matrix does no longer contain an explicit dependence on the higher moments $\alpha_i$ for $i>1$ except for the terms $\dqh,\dqq$. However, the equations for the higher moments still couple with the other equations so that the system is still very non-linear.

Note that the system HSWEM can be written as 
\begin{subequations}\label{eq:HSWEM}
	\begin{equation}
\label{eq:HSWEM_A}
\p_t \bd{W} + \bd{F}(\bd{W}) = \wtd{\bd{B}}(\bd{W})\p_x \bd{W},
\end{equation}
 where \begin{equation}
 \label{eq:HSWEM_B}
 \wtd{\bd{B}}(\bd{W}) = \bd{B}(\bd{W}) - (\bd{A}_H(\bd{W})-\bd{A}(\bd{W})) = \dfrac{\partial\bd{F}}{\partial\bd{W}} - \bd{A}_H(\bd{W}).
 \end{equation}
 \end{subequations}
That is, the Hyperbolic Shallow Water Exner Moment system can be seen as the Shallow Water Exner Moment system with modified non-conservative terms.

For the sake of completeness, let us write explicitly the matrix $\dfrac{\partial\bd{F}}{\partial\bd{W}}\in \mathbb{R}^{(N+3)\times(N+3)}$
\begin{equation}\label{eq:jacobianoF}
\dfrac{\partial\bd{F}}{\partial\bd{W}} = \left(
\begin{array}{cc|ccc|c}
 & 1 &   & &  & \\[2mm]
gh -u_m^2-\sum_{j=1}^N \frac{\alpha_j^2}{2j+1} & 2u_m & \frac{2}{3} \alpha_1 &  \dots  & \frac{2}{2N+1}\alpha_N & \ \  \\[2mm] \hline
-2u_m\alpha_1-E_1  & 2\alpha_1 & 2u_m+D_{11}&  \dots & D_{1N}& \\[2mm]
\vdots &\vdots & \vdots & \ddots & \vdots &  \\[2mm]
-2u_m\alpha_N-E_N  & 2\alpha_N & D_{N1} &  \dots & 2u_m+D_{NN} & \\[2mm] \hline
\delta_h & \delta_q & \delta_q &   \ldots & \delta_q&
\end{array}
\right),
\end{equation}
with $E_i= \ds\sum_{j,k=1}^N A_{ijk} \alpha_j\alpha_k$ and $D_{ij}=2\ds\sum_{k=1}^NA_{ijk}\alpha_k$.

\newpage

Interestingly, we can get the following result for the propagation speeds of the HSWEM system, which are the eigenvalues of the above matrix $\bd{A}_H(\bd{W})$.

\begin{theorem}
    \label{TheoremEV}
    The HSWEM system matrix $\bd{A}_H(\bd{W}) \in \mathbb{R}^{(N+3)\times(N+3)}$ \eqref{e:HSWEM} has the following characteristic polynomial
    \begin{equation*}
        \chi_{A} (\lambda) = \left[(-\lambda)\left( \left(\lambda - u_m\right)^2 -gh -\alpha_1^2 \right) + gh (\dqh + \lambda \dqq + 2\alpha_1 \dqq) \right] \cdot \chi_{A_2} (\lambda - u_m),
    \end{equation*}
    with $\chi_{A}(\lambda)=\mbox{det}(\bd{A}_H(\bd{W})-\lambda\bd{I})$, and $\chi_{A_2}(\lambda-u_m)=\mbox{det}(A_2-(\lambda-u_m)\bd{I})$ where $A_2 \in \mathbb{R}^{N \times N}$ is defined as follows

    \begin{equation} \label{A_2}
        A_{2}=\left(
        \begin{array}{ccccc}
            & c_2   &       & \\
        a_2 &       & \ddots& \\
            & \ddots&       & c_N  \\
            &       & a_N   &  \\
        \end{array}
        \right),
    \end{equation}
    with values $c_{i}=\frac{i+1}{2i+1}\alpha_1$ and $a_i=\frac{i-1}{2i-1}\alpha_1$ the values above and below the diagonal, respectively, from \eqref{e:HSWEM}.
\end{theorem}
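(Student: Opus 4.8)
My plan is to reduce $\chi_A(\lambda)=\det(\bd{A}_H(\bd{W})-\lambda\bd{I})$ to a $3\times3$ determinant by a Schur complement with respect to the moment block of $\bd A_H$. First I would conjugate $\bd A_H-\lambda\bd{I}$ by the permutation that reorders the unknowns $(h,hu_m,h\alpha_1,\dots,h\alpha_N,b)$ as $(h,hu_m,b,h\alpha_1,\dots,h\alpha_N)$; applying the same permutation to rows and columns leaves the determinant unchanged and puts the matrix into block form $\begin{pmatrix}X&Y\\ Z&M\end{pmatrix}$, where $X\in\R^{3\times3}$ is the block of the $h,hu_m,b$ rows and columns, $M\in\R^{N\times N}$ is the moment block, and $Y,Z$ are the coupling blocks. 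Reading off \eqref{e:HSWEM},
\[
X=\begin{pmatrix}-\lambda&1&0\\ gh-u_m^2-\tfrac13\alpha_1^2&2u_m-\lambda&gh\\ \dqh&\dqq&-\lambda\end{pmatrix},\qquad M=(u_m-\lambda)\bd{I}_N+A_2,
\]
so that $\det M=\det\!\big(A_2-(\lambda-u_m)\bd{I}_N\big)=\chi_{A_2}(\lambda-u_m)$ already accounts for the second factor. Moreover $Z$ has only two nonzero columns: its $h$-column equals $-2u_m\alpha_1\boldsymbol e_1-\tfrac23\alpha_1^2\boldsymbol e_2$ and its $hu_m$-column equals $2\alpha_1\boldsymbol e_1$; and $Y$ has only two nonzero rows, the $hu_m$-row $\tfrac23\alpha_1\boldsymbol e_1^{\top}$ and the $b$-row $\dqq(\boldsymbol e_1+\dots+\boldsymbol e_N)^{\top}$.

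The step that makes the computation short is to notice that $M\boldsymbol e_1=(u_m-\lambda)\boldsymbol e_1+a_2\boldsymbol e_2$ with $a_2=\tfrac13\alpha_1$, so the $h$-column of $Z$ can be rewritten intrinsically as $-2u_m\alpha_1\boldsymbol e_1-\tfrac23\alpha_1^2\boldsymbol e_2=-2\lambda\alpha_1\boldsymbol e_1-2\alpha_1 M\boldsymbol e_1$. Consequently $M^{-1}$ applied to the two nonzero columns of $Z$ gives $-2\lambda\alpha_1 M^{-1}\boldsymbol e_1-2\alpha_1\boldsymbol e_1$ and $2\alpha_1 M^{-1}\boldsymbol e_1$, and hence $YM^{-1}Z$ depends on $M^{-1}$ only through the two scalars $(M^{-1})_{11}$ and $\tau_1:=(\boldsymbol e_1+\dots+\boldsymbol e_N)^{\top}M^{-1}\boldsymbol e_1$. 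Plugging this into the Schur complement identity $\det\begin{pmatrix}X&Y\\ Z&M\end{pmatrix}=\det(M)\,\det(X-YM^{-1}Z)$ (legitimate when $M$ is invertible; the general case follows since both sides are polynomials in $\lambda$ agreeing off the finite zero set of $\det M$) and expanding the $3\times3$ determinant $\det(X-YM^{-1}Z)$ along its first row, I expect the $(M^{-1})_{11}$-terms to cancel in pairs, and likewise the $\tau_1$-terms, leaving exactly $(-\lambda)\big((\lambda-u_m)^2-gh-\alpha_1^2\big)+gh(\dqh+\lambda\dqq+2\alpha_1\dqq)$, i.e. the first factor.

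The main obstacle is organizational rather than conceptual: one must assemble $X-YM^{-1}Z$ correctly, tracking which entries of \eqref{e:HSWEM} land in which block, and then verify that the $M^{-1}$-dependent contributions really do cancel. If one prefers not to use the rewriting of the $h$-column of $Z$, the same cancellation can be forced by hand via the two identities $(u_m-\lambda)(M^{-1})_{11}+a_2(M^{-1})_{12}=1$ and $(u_m-\lambda)\tau_1+a_2\tau_2=1$ (with $\tau_j$ the $j$-th column sum of $M^{-1}$), which themselves come from reading the first component of $\boldsymbol v^{\top}M=\boldsymbol e_1^{\top}$ (for $\boldsymbol v^{\top}$ the first row of $M^{-1}$) and of $\boldsymbol\tau^{\top}M=(\boldsymbol e_1+\dots+\boldsymbol e_N)^{\top}$ against $M\boldsymbol e_1=(u_m-\lambda)\boldsymbol e_1+a_2\boldsymbol e_2$. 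As consistency checks, the total degree is $3+N=N+3$, as it must be, and in the degenerate case $\alpha_1=0$ the block $M=(u_m-\lambda)\bd{I}_N$ decouples and the claimed factorization is immediate.
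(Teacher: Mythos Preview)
Your Schur complement argument is correct and gives a genuinely different proof from the paper's. The paper proceeds by a direct cofactor expansion: after the shift $\widetilde\lambda=\lambda-u_m$ it develops $\det(\bd A_H-\lambda\bd I)$ along the first two rows, introduces the auxiliary determinants $|A_i|$ (the tridiagonal moment minors) and $|B_i|$ (mixed minors containing the sediment row), derives recursion relations linking $|A_{i+1}|$ to $|A_i|,|A_{i-1}|$ and $|B_{i+1}|$ to $|B_i|,|A_{i+1}|$, and then checks by substitution that the coefficients multiplying $|A_3|$ and $|B_2|$ vanish, leaving only the $|A_2|$ term. Your route instead block-permutes to isolate the moment block $M=(u_m-\lambda)\bd I_N+A_2$, and the single structural observation $M\boldsymbol e_1=(u_m-\lambda)\boldsymbol e_1+a_2\boldsymbol e_2$ lets you rewrite the $h$-column of $Z$ so that $M^{-1}Z$ depends on $M^{-1}$ only through $(M^{-1})_{11}$ and $\tau_1$; the cancellation of both in the $3\times3$ Schur determinant then follows from a short expansion along the first row. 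What your approach buys is brevity and a transparent explanation of \emph{why} the factorization occurs (the coupling columns lie in the span of $\boldsymbol e_1$ and $M\boldsymbol e_1$), at the minor cost of invoking invertibility of $M$ and then extending by polynomial identity, which you handle correctly. The paper's expansion is longer and more bookkeeping-heavy, but stays purely at the level of determinant identities without any inversion, and its recursions $|A_{i+1}|,|B_{i+1}|$ could in principle be reused for related models.
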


\begin{proof}
    See Appendix \ref{app:charPoly}.
\end{proof}

\begin{remark} \label{re:eigenMoments}
    Theorem \ref{TheoremEV} states that the eigenvalue structure of the moment part within the HSWEM is preserved. The last $N$ propagation speeds are thus given by 
    $$
        \lambda_i = u_m + b_i \alpha_1, \qquad\mbox{for }i=4,\ldots,N+3,
    $$
    where $b_i \alpha_1$ are the real-valued eigenvalues of the matrix $A_2$ \cite{koellermeier:2019}, which can be computed explicitly.
\end{remark}
\begin{remark}
    The first three eigenvalues $\lambda_1,\lambda_2,\lambda_3$ are the roots of the polynomial 
    $$
        P_{A_H}(\lambda)=(-\lambda)\left( \left(\lambda - u_m\right)^2 -gh -\alpha_1^2 \right) + gh(\dqh + \lambda \dqq + 2\alpha_1 \dqq),
    $$
    which is a consistent extension of the standard Shallow Water Exner model, see Equation \eqref{e:SWE_ev}.
\end{remark}
\begin{remark}
    For $\alpha_1=0$, the same eigenvalues as for the SWE model will be obtained. As the roots of a polynomial change continuously with its coefficients, we will obtain similar eigenvalues $\lambda_1,\lambda_2,\lambda_3$ for small values of $\alpha_1$. It is clear that the main difference between the propagation speeds of both models lies in the value of $\alpha_1$. We thus expect differences in numerical simulations whenever $\alpha_1$ is reasonably large.
\end{remark}
\begin{remark}
    An analogous result to Theorem \ref{TheoremEV} can be derived if the model is not based on the HSWM, but on the slightly modified $\beta$-HSWM, derived in \cite{koellermeier:2019}. The $\beta$-HSWM differs from the HSWM in only one entry of the system matrix while it achieves similar accuracy as the HSWM model for standard shallow flows. The $\beta$-HSWM model has the benefit that the values $b_i$ of the corresponding matrix similar to $A_2$ are guaranteed to fulfill $b_i \in \left[ -1,1 \right]$. This makes it easier to obtain estimates for the maximum and minimum value of the propagation speeds within the numerical scheme. In this paper, we will only consider the HSWEM for conciseness and not use the $\beta$-HSWM model for simulations.
\end{remark}

In next section we detail the numerical approximation of the new HSWEM model.

\section{Numerical approximation}\label{se:num_aprox}
In the numerical approximation of the proposed model, one finds two different scenarios depending on the time-scale of the sediment dynamics. On the one hand, when very slow processes happen, there is a weak interaction between the sediment and the hydrodynamic counterpart. In this case, for which the computational time is very long, a decoupled discretization may be justified. This approach is for instance used in \cite{cunge_practical_1980,de_vriend_medium-term_1993,kubatko:2006}. On the other hand, we have the case of rapid movements, for which there is a strong interaction between fluid and sediment. In this case, the equation of the sediment must be coupled to the hydrodynamic counterpart, leading to a coupled discretization. Summarizing, the time-splitting for the sediment transport is only valid for those cases where the time-scale of the  morphodynamic problem is much slower than the time-scale of the hydrodynamic one, as discussed in \cite{cordier:2011}. Otherwise, a uncoupled treatment of the Exner equation would lead to stability problems. Here, a coupled discretization is considered that will be robust in any scenario, avoiding instabilities due to decoupling. 

In any case, it is essential to give a good approximation of the eigenvalues for different reasons. In the case of strong interactions, the approximation of the eigenvalue associated to the sediment is important for stability reasons. We refer the reader to \cite{cordier:2011}, where it is shown that the use of approximations given by shallow water eigenvalues results in wrong speed information for supercritical regimes in the Exner system. In the case of weak interactions, including some information on the intermediate eigenvalue associated to the bedload transport is essential, otherwise one would add too much numerical diffusion into the scheme for the sediment. In this case (long time simulations with a weak interaction), it is also important to consider high-order schemes to decrease the numerical diffusion.

We consider a finite volume method, based on a two-step approach: first, the friction terms in the fluid are neglected, which will then be considered in the second step by a semi-implicit approach. Remark that this is common approach for shallow water type systems (see \cite{garcia_simulacion_2000, castrodiaz2008sediment, mangeneycastelnau_numerical_2003} among others). When the high-order scheme is applied, we combine CWENO space reconstructions (see for example \cite{cravero2019optimal} and the references therein) with SI-RK3 introduced in \cite{chertock:15}. Let us consider that the horizontal domain is subdivided in control volumes $V_i = [x_{i-1/2},x_{i+1/2}]$, for $i\in\mathcal{I}$. For the sake of simplicity, a constant cell length $\Delta x$ will be considered. The center of the volume is $x_i = \left(x_{i-1/2}+x_{i+1/2}\right)/2$, and for a time $t^n$, the cell average is
$$
\bd{W}^n_i = \dfrac{1}{\Delta x}\dint^{x_{i+1/2}}_{x_{i-1/2}} \bd{W}(x,t^n) dx.
$$
Firstly, the hyperbolic system with non-conservative product is solved in the framework of path-conservative schemes \cite{pares:2004,pares:2006, castro2017well}. Secondly, a semi-implicit approach is considered to add friction terms.\\

\medskip

The hyperbolic system can be written as \eqref{eq:HSWEM}, and it is discretized as
\begin{subequations}\label{eq:scheme}
\begin{equation} \label{eq:step1}
\begin{array}{l}
\bd{W}_{i}^{n+1/2}=\bd{W}_i^n +\dfrac{\Delta t}{\Delta x} \left( \bd{\F}_{i-1/2}^n-\bd{\F}_{i+1/2}^n \,
+ \dfrac12\left(\bd{\mathcal{B}}_{i+1/2}^n+\bd{\mathcal{B}}_{i-1/2}^n \right)\right),
\end{array}
\end{equation}
with
\begin{equation}
\bd{\mathcal{B}}_{i+1/2} = \dint_{0}^1 \wtd{\bd{B}}\big(\Phi\left(s;\bd{W}_i,\bd{W}_{i+1}\right)\big)\dfrac{\p}{\p s}\Phi\left(s;\bd{W}_i,\bd{W}_{i+1}\right) ds, 
\end{equation}
where $\wtd{\bd{B}}$ is given by \eqref{eq:HSWEM_B}, and  $\Phi\left(s;\bd{W}_i,\bd{W}_{i+1}\right)$ is a path joining the two states. In this case, we consider as path the straight segments $\Phi\left(s;\bd{W}_i,\bd{W}_{i+1}\right) = \bd{W}_i + s\left(\bd{W}_{i+1}-\bd{W}_{i}\right)$. In practice, this integral is approximated by a quadrature rule:
$$
\bd{\mathcal{B}}_{i+1/2} = \dsum_{k=1}^{M} \omega_k \wtd{\bd{B}} \big(\Phi\left(s_k;\bd{W}_i,\bd{W}_{i+1}\right)\big)\left(\bd{W_{i+1}}-\bd{W_{i}}\right),
$$
where $\omega_k$ and $s_k$ are the weight and quadrature points, respectively, of the chosen quadrature formula. 
The numerical flux in \eqref{eq:step1} is written
\begin{equation}
\begin{array}{l}
\bd{\F}_{i+1/2} = \dfrac{1}{2} \left(\bd{F} (\bd{W}_i)+\bd{F}(\bd{W}_{i+1})\right) \,
- \,\dfrac{1}{2}\bd{\mathcal{D}}_{i+1/2}, \\
\end{array}
\end{equation}
\end{subequations}
where $\bd{\mathcal{D}}_{i+1/2}$ is the numerical diffusion of the scheme. In the previous expression we have dropped the time dependency for simplicity.

We consider here a method in the framework of Polynomial Viscosity Methods (PVM) introduced in \cite{castro:2012}, where the numerical diffusion  $\bd{\mathcal{D}}_{i+1/2}$ is defined in terms of a polynomial evaluation on the Roe matrix of the full non-conservative system. In particular, the IFCP method \cite{fernandezNieto:2011}, that is used in \cite{gonzalezAguirre:2020} to simulate sediment transport problems with erosion-deposition effects, will be used here as well. In \cite{gonzalezAguirre:2020}, the ideas introduced in \cite{cordier:2011} were used to approximate the eigenvalues of the system. This is crucial, namely in the case of a strong fluid-sediment interaction. This approach must be adapted and modified here in order to not fail in some particular configurations, as when $Fr \approx 1$. This will be discussed later and in the numerical tests section.\\

Let us focus now on the definition of the numerical scheme. The approximation of the eigenvalues will be discussed later. Let $\lambda_{m}<\lambda_{med}<\lambda_{p}$ be the approximations of the minimum, an intermediate, and the maximum eigenvalue. The numerical diffusion $\bd{\mathcal{D}}_{i+1/2}$ of the IFCP method could be written as follows (see \cite{fernandezNieto:2011}):

\begin{eqnarray}\label{eq:diffusion}
\bd{\mathcal{D}}_{i+1/2} &=& \b_1 \left(\bd{W}_{i+1}-\bd{W}_i\right) 
+ \b_2\left(\bd{F}(\bd{W}_{i+1})-\bd{F}(\bd{W}_{i}) + \bd{\mathcal{B}}_{i+1/2}\right) \\[3mm]
&&+ \b_3 \bd{\mathcal{A}}_{i+1/2}\left(\bd{F}(\bd{W}_{i+1})-\bd{F}(\bd{W}_{i}) + \bd{\mathcal{B}}_{i+1/2}\right)\nonumber,
\end{eqnarray}
where $\bd{\mathcal{A}}_{i+1/2}=\bd{A}_H(\bd{W}_{i+1/2})$, is the evaluation of the system matrix \eqref{e:HSWEM} on the intermediate (Roe) state $\bd{W_{i+1/2}}=\left(h_{i+1/2},(hu)_{i+1/2},(h\alpha_1)_{i+1/2},\dots,(h\alpha_N)_{i+1/2}\right)$, defined by 
$$
h_{i+1/2}=\dfrac{h_i+h_{i+1}}{2} \qquad\mbox{and}\qquad (h\varsigma)_{i+1/2} = h_{i+1/2}\dfrac{\varsigma_{i}\sqrt{h_i}+\varsigma_{i+1}\sqrt{h_{i+1}}}{\sqrt{h_i}+\sqrt{h_{i+1}}}, 
$$ for $\varsigma=u_m,\alpha_1,\dots,\alpha_N$.

The coefficients $\b_1,\b_2,\b_3$ in \eqref{eq:diffusion} are the solution of system 
$$
\left(\begin{matrix}
1 & \lambda_m & \lambda_m^2\\
1 & \lambda_{med} & \lambda_{med}^2\\
1 & \lambda_p & \lambda_p^2\\
\end{matrix}\right) \left(\begin{matrix}
\b_1\\
\b_2\\
\b_3
\end{matrix}\right) = \left(\begin{matrix}
|\lambda_m|\\
|\lambda_{med}|\\
|\lambda_p|\\
\end{matrix}\right).
$$
Therefore, defining $\gamma = \left(\lambda_{med}-\lambda_{m}\right)\left(\lambda_{p}\left(\lambda_{p}-\lambda_{m}-\lambda_{med}\right)+\lambda_{m}\lambda_{med}\right)$, we obtain
$$\b_1\gamma = |\lambda_m|\left(\lambda_{med}\lambda^2_{p}-\lambda^2_{med}\lambda_{p}\right) - |\lambda_{med}|\left(\lambda_{m}\lambda^2_{p}-\lambda^2_{m}\lambda_{p}\right) + |\lambda_p|\left(\lambda_{m}\lambda^2_{med}-\lambda^2_{m}\lambda_{med}\right),
$$
$$\b_2\gamma = -|\lambda_m|\left(\lambda^2_{p}-\lambda^2_{med}\right) + |\lambda_{med}|\left(\lambda^2_{p}-\lambda^2_{m}\right) - |\lambda_p|\left(\lambda^2_{med}-\lambda^2_{m}\right),
$$
$$
\b_3\gamma = |\lambda_m|\left(\lambda_{p}-\lambda_{med}\right) - |\lambda_{med}|\left(\lambda_{p}-\lambda_{m}\right) + |\lambda_p|\left(\lambda_{med}-\lambda_{m}\right).
$$

Unfortunately, the numerical scheme \eqref{eq:scheme} with the numerical diffusion \eqref{eq:diffusion} is not well-balanced for the water at rest stationary solution given by  
	$$
	u_m = \a_1 = \cdots = \a_N = 0,\qquad\mbox{and}\qquad b+h = constant.
	$$
In particular, the term $\b_1 \left(\bd{W}_{i+1}-\bd{W}_i\right)$ is the responsible of this fact. In order to obtain a well-balanced numerical scheme we propose to follow \cite{castro:2012} and replace  this term by 
$$
\b_1 \left(\bd{W}_{i+1}-\bd{W}_i + \bd{S}^*_{i+1/2} \right) \quad\mbox{where}\quad \bd{S}^*_{i+1/2}=\left(b_{i+1}-b_i,0,\dots,0\right)^t .
$$

Therefore,  the numerical diffusion reads as follows
\begin{eqnarray}\label{eq:diffusion2}
\bd{\mathcal{D}}_{i+1/2} &=& \b_1 \left(\bd{W}_{i+1}-\bd{W}_i -\bd{S^*}_{i+1/2} \right)+ \b_2\left(\bd{F}(\bd{W}_{i+1})-\bd{F}(\bd{W}_{i}) + \bd{\mathcal{B}}_{i+1/2}\right) \\[3mm]
&& + \b_3 \bd{\mathcal{A}}_{i+1/2}\left(\bd{F}(\bd{W}_{i+1})-\bd{F}(\bd{W}_{i}) + \bd{\mathcal{B}}_{i+1/2}\right)\nonumber.
\end{eqnarray}

\begin{remark}
	The numerical scheme above, with the numerical diffusion defined by \eqref{eq:diffusion2}, is well-balanced for steady state solutions corresponding to water at rest. More explicitly, the scheme preserves the solutions satisfying
	$$
	u_m = \a_1 = \cdots = \a_N = 0,\qquad\mbox{and}\qquad b+h = constant.
	$$

\end{remark}

\bigskip

Once the hyperbolic part of the system is solved, the friction terms are added using a semi-implicit discretization
$$\bd{W}_{i}^{n+1}=\bd{W}_i^{n+1/2} +\Delta t \bd{E}\left(\bd{W}_i^n,\bd{W}_{i}^{n+1}\right).$$

\noindent We trivially have $h^{n+1}_i = h_i^{n+1/2}$, and taking into account that $u_b = u_m + \dsum_{j=1}^N \a_j$, the equations
$$
\begin{array}{l}
h_i^{n+1}u_{m,i}^{n+1}=h_i^{n+1}u_{m,i}^{n+1/2} - \Delta t \dfrac{g n^2|u^n_{b,i}|}{h_i^{n,1/3}}u_{b,i}^{n+1},\\[3mm]
h^{n+1}_i \a_{k,i}^{n+1}=h_i^{n+1}\a_{k,i}^{n+1/2} - \Delta t (2k+1)\left(\dfrac{g n^2|u^n_{b,i}|}{h_i^{n,1/3}}u_{b,i}^{n+1} + \dfrac{\nu}{h_{i}^n}\dsum_{j=1}^N C_{kj}\a^{n+1}_{j,i}\right),\quad \mbox{for }k = 1,2,...,N,
\end{array}
$$
define a $N\times N$ linear system that can be exactly solved to find $u_{m,i}^{n+1},\a_{1,i}^{n+1},\a_{2,i}^{n+1},...,\a_{N,i}^{n+1}$, and therefore $\bd{W}_i^{n+1}$. 

\subsubsection*{Approximation of the eigenvalues}
As mentioned before, it is important to give an accurate approximation of the eigenvalues. In particular, an upper bound of the maximum absolute value of the eigenvalues is needed for the CFL condition. Moreover, some information on the wave speed associated to the sediment layer is needed, so that not too much numerical diffusion is added for the sediment movement. To this end, following Theorem \ref{TheoremEV}, we need to approximate the roots of the polynomial 
$P_{S}(\lambda) = f(\lambda) - d(\lambda)$, with
$$
f(\lambda) = \lambda\left(\left(u_m-\lambda\right)^2 - \left(gh+\a_1^2\right)\right),
$$
and
$$
d(\lambda) = gh\left(\lambda \dqq +\dqh+ 2\alpha_1 \dqq\right).
$$
Note that the rest of the eigenvalues can be explicitly computed (see Remark \ref{re:eigenMoments}). Note also
that $\dqq,\dqh$ are given by \eqref{eq:dqb}.\\

Now, the eigenvalues are the values $\lambda^*$ verifying $f(\lambda^*)=d(\lambda^*)$. To find these values, we use the approach in  \cite{gonzalezAguirre:2020}. One iteration of the Newton's method is performed, taking as initial seeds the roots of $f(\lambda)$, i.e.,
$$
\left(S_{-},f(S_{-})\right) = \left(u-\sqrt{gh+\a_1^2},0\right),\quad \left(S_{med},f(S_{med})\right) = \left(0,0\right),\quad \left(S_{+},f(S_{+})\right) = \left(u+\sqrt{gh+\a_1^2},0\right).
$$ 
This procedure is depicted in Figure \ref{fig:sketch_eig}. 
The approximations of the eigenvalues are the solutions of $l(\lambda) = d(\lambda)$, where $l(\lambda)$ is the straight line that is tangent to $f(\lambda)$ at the point $\left(S,f(S)\right)$, with $S=S_{-},S_{med},S_{+}$. That is, the approximated eigenvalues are
$$
\lambda^{\ast} = \dfrac{f'(S)S-f(S)+gh\left(\dqh+2\a_1\dqq\right)}{f'(S)-gh\dqq},\quad\mbox{with }S=S_{-},S_{med},S_{+}. 
$$

\begin{figure}[!ht]
	\begin{center}
		\includegraphics[width=0.7\textwidth]{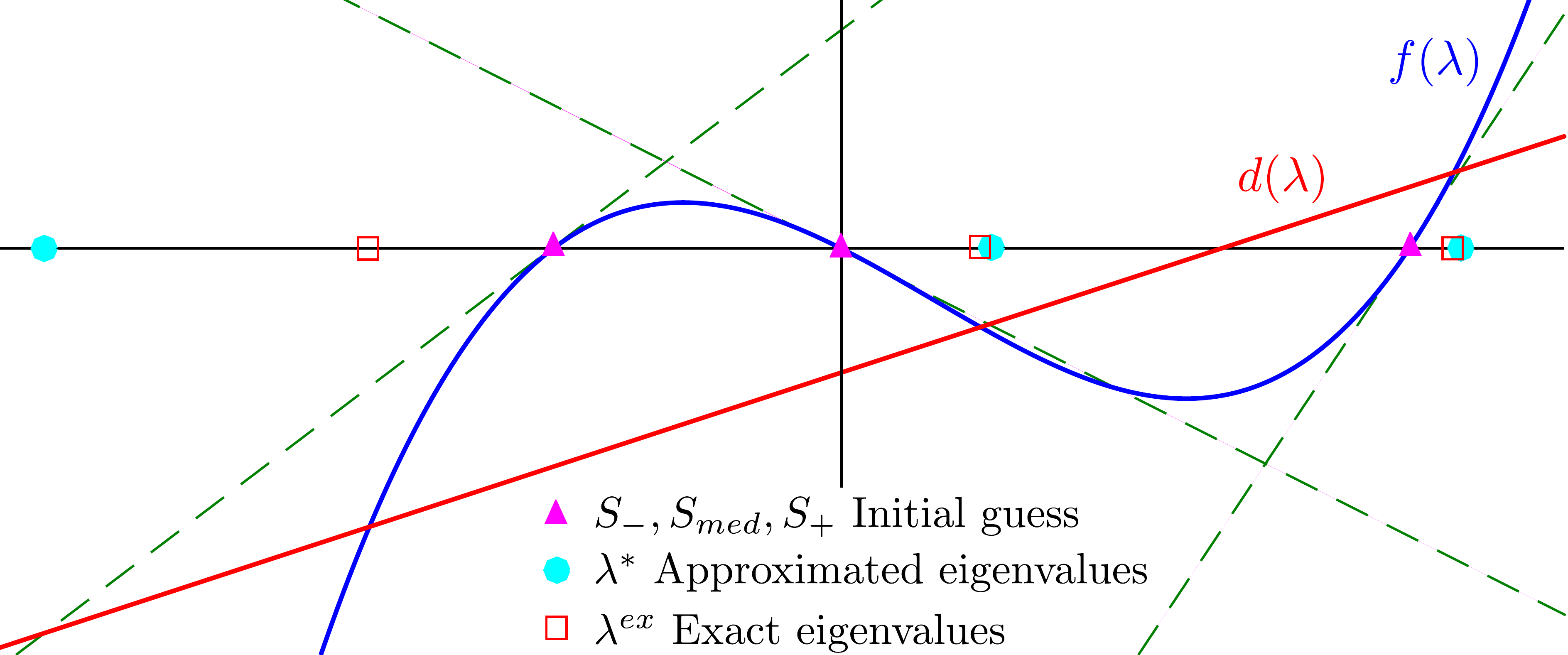}
\end{center}
	\caption{\label{fig:sketch_eig} \it{Sketch of the procedure to approximate the eigenvalues of the system matrix.}}
\end{figure}
However, one should be careful about the initial seeds, otherwise this approach might fail. This is the case for points where $u_m \approx \sqrt{gh}$, i.e. $Fr\approx 1$, or in the case of having $f'(\lambda)\approx d'(\lambda)$, i.e. $P_{S}'(\lambda) \approx 0$. Let us assume a positive velocity $u_m>0$ (the case $u_m<0$ is analogous). It is a well-known fact for the Exner system that one always has two eigenvalues of the same sign and one of opposite sign (see \cite{cordier:2011}). Looking at the shape of $f(\lambda)$ and $d(\lambda)$ for $u_m>0$, this procedure gives a positive eigenvalue for the seed $S_{+}$.  However, a negative eigenvalue is not guaranteed. For example, a configuration where this procedure fails is shown in Figure \ref{fig:sketch_eig_mod}, where three positive eigenvalues are wrongly predicted. When looking it in detail, it is the result of a bad choice of the initial seed $S_{-}$ for the lowest eigenvalue. If this seed is not properly chosen, this approach produces three positive eigenvalues, leading to spurious oscillations in the simulations. This fact will be shown in the numerical tests (see Subsection \ref{test:academic}). Assuming $S_{-} < S_{med}$, the procedure fails if $f'(S_{-})<d'(\lambda) = ghb$ holds, i.e., if the slope of $f(\lambda)$ at $S_{-}$ is lower than the slope of $d(\lambda)$. It is avoided by simply moving the initial seed $S_{-}$ to the left, i.e. modifying $S_{-}=S_{-}-\epsilon_0$, with $\epsilon_0=0.5$ for instance, as many times as necessary to have $f'(S_{-})>2 d'(\lambda)$.

\begin{figure}[!ht]
	\begin{center}
		\includegraphics[width=0.7\textwidth]{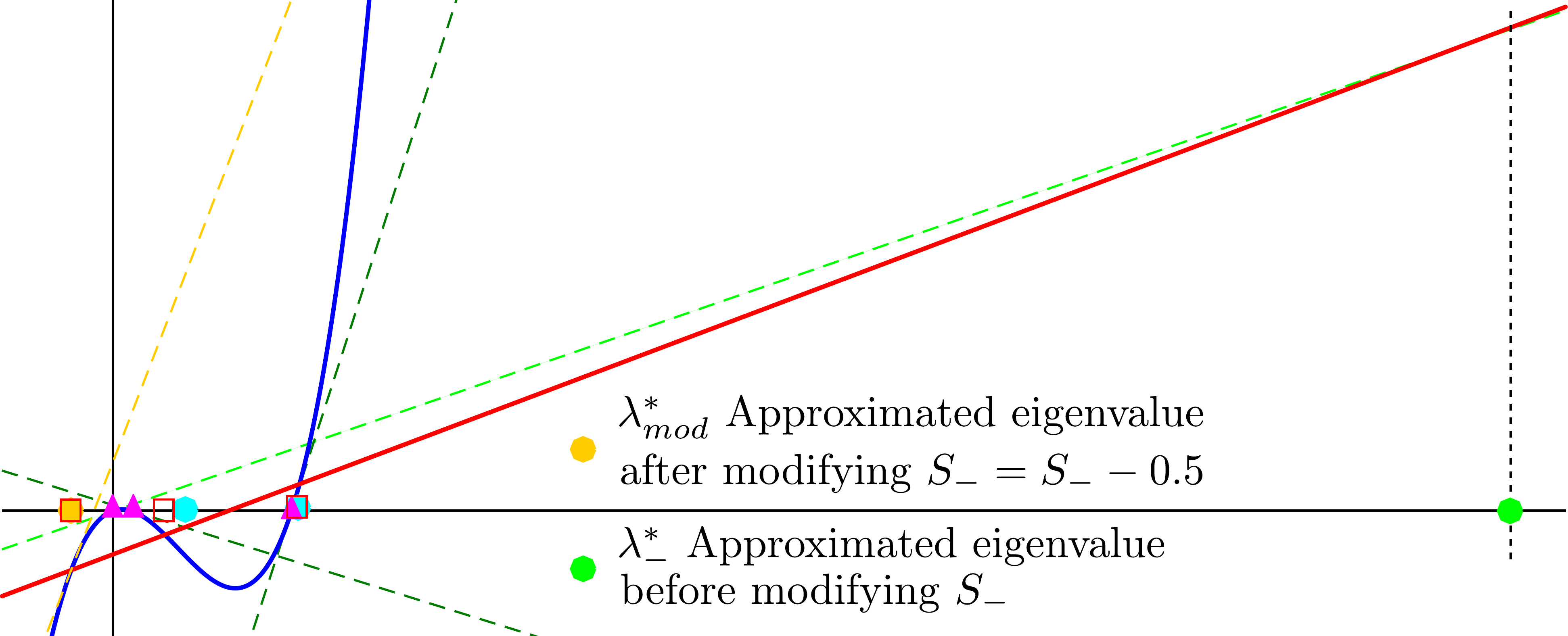}
	\end{center}
	\caption{\label{fig:sketch_eig_mod} \it{Sketch of a scenario where the procedure to approximate the eigenvalues fails.}}
\end{figure}

According to our experience, a single iteration of Newton's method is enough to have an approximation of the eigenvalues close enough to the exact eigenvalues. However, one could eventually proceed with several iterations of the Newton's method in order to have better approximations and, in particular, whenever one obtains an approximation of three eigenvalues of the same sign, which is not correct.

\bigskip 

Once these three eigenvalues are approximated, and denoting them as $\lambda_{m}<\lambda_{int}<\lambda_{p}$, the IFCP scheme should consider the internal eigenvalue with larger absolute value. This is a requirement to ensure IFCP scheme to be $L^\infty$-stable under the usual CFL condition. In order to satisfy this condition, we consider the internal eigenvalues given by Remark \ref{re:eigenMoments}, and among those values we define $\lambda_{aux}$ as the one with larger absolute value.
Now, the intermediate eigenvalue $\lambda_{med}$ in the scheme is computed as
$$
\lambda_{med} = sgn(\lambda_{m}+\lambda_{p})\max\left(|\lambda_{int}|,|\lambda_{aux}|\right).
$$

\bigskip

For the numerical tests in the following section, we consider the third-order HSWEM model, i.e. $N=3$, as one example that allows for a relatively complex vertical velocity profiles. Therefore, let us detail the eigenvalues in this particular case.
Thanks to Theorem \ref{TheoremEV}, we have the following corollary about the eigenvalues of the third-order HSWEM model:
\begin{corollary}\label{corolary}
	The eigenvalues of the third-order HSWEM model are $\lambda_{1,2,3}$, the three roots of
	$$P_{S}(\lambda)=-\lambda\left( \left(\lambda - u_m\right)^2 -gh -\alpha_1^2 \right) + gh(\dqh + \lambda \dqq + 2\alpha_1 \dqq),$$
	and the rest of the eigenvalues are explicitly given by
	$$
	\lambda_{4} = u_m,\qquad \lambda_{5,6} = u_m\pm\sqrt{\dfrac37}\a_1. 
	$$
\end{corollary}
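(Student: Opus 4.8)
The plan is to apply Theorem~\ref{TheoremEV} in the case $N=3$ and then perform the one finite computation that remains, namely the characteristic polynomial of the $3\times 3$ matrix $A_2$. Theorem~\ref{TheoremEV} tells us that $\chi_A(\lambda)=P_{A_H}(\lambda)\cdot\chi_{A_2}(\lambda-u_m)$, where the first factor $P_{A_H}(\lambda)=(-\lambda)\big((\lambda-u_m)^2-gh-\alpha_1^2\big)+gh(\dqh+\lambda\dqq+2\alpha_1\dqq)$ is precisely the polynomial $P_{S}$ of the statement. Hence its three roots are three eigenvalues of the system, which gives $\lambda_1,\lambda_2,\lambda_3$, and it remains only to find the three roots of $\chi_{A_2}(\lambda-u_m)$; note that the degrees add up to $3+N=N+3$, matching the size of $\bd{A}_H(\bd{W})$.

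Next I would write $A_2$ explicitly. From \eqref{A_2} with $c_i=\frac{i+1}{2i+1}\alpha_1$ and $a_i=\frac{i-1}{2i-1}\alpha_1$, the super-diagonal entries are $c_2=\frac{3}{5}\alpha_1$, $c_3=\frac{4}{7}\alpha_1$ and the sub-diagonal entries are $a_2=\frac{1}{3}\alpha_1$, $a_3=\frac{2}{5}\alpha_1$, so $A_2$ is the tridiagonal matrix with zero diagonal and these off-diagonal entries. Expanding $\det(A_2-\mu\bd{I})$ along the first row (equivalently, using the standard three-term recursion for tridiagonal determinants) gives $\chi_{A_2}(\mu)=-\mu\big(\mu^2-(a_2c_2+a_3c_3)\big)$. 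Substituting the coefficients yields $a_2c_2+a_3c_3=\frac{1}{5}\alpha_1^2+\frac{8}{35}\alpha_1^2=\frac{3}{7}\alpha_1^2$, hence $\chi_{A_2}(\mu)=-\mu\big(\mu^2-\frac{3}{7}\alpha_1^2\big)$. Putting $\mu=\lambda-u_m$ and reading off the roots gives $\lambda_4=u_m$ and $\lambda_{5,6}=u_m\pm\sqrt{3/7}\,\alpha_1$, which is the claim.

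I do not expect a genuine obstacle here: the only points requiring care are the correct evaluation of the index-dependent coefficients $c_i,a_i$ for $i=2,3$ and the elementary arithmetic $\frac{1}{5}+\frac{8}{35}=\frac{3}{7}$, both immediate. As consistency checks one may note that $a_ic_i\ge 0$ for every $i$ regardless of the sign of $\alpha_1$, so $A_2$ is similar via a positive diagonal scaling to a symmetric tridiagonal matrix and thus has only real eigenvalues, in agreement with Remark~\ref{re:eigenMoments}; and that for $\alpha_1=0$ one gets $\lambda_4=\lambda_5=\lambda_6=u_m$ while $P_{S}$ reduces to the Shallow Water Exner characteristic polynomial \eqref{e:SWE_ev}, as it should.
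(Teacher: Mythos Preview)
Your proposal is correct and follows essentially the same approach as the paper: both apply Theorem~\ref{TheoremEV} in the case $N=3$ and then compute the $3\times 3$ determinant $\chi_{A_2}(\lambda-u_m)$ explicitly to obtain the roots $u_m$ and $u_m\pm\sqrt{3/7}\,\alpha_1$. Your version additionally packages the tridiagonal determinant as $-\mu\big(\mu^2-(a_2c_2+a_3c_3)\big)$ and includes two sanity checks (real spectrum via similarity to a symmetric tridiagonal matrix, and the $\alpha_1=0$ limit), but these are embellishments on the same argument.
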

\begin{proof}
	It follows from Theorem \ref{TheoremEV} in the particular case $N=3$, where the characteristic polynomial $\chi_{A_2} (\lambda - u_m)$ is 
	$$
	\chi_{A_2} (\lambda - u_m) = \left|
	\begin{array}{ccc}
	-\lambda + u_m & \dfrac35 \a_1  & 0 \\
	\dfrac13 \a_1 & -\lambda + u_m & \dfrac47 \a_1  \\
	0 & \dfrac25 \a_1 & -\lambda + u_m  \\
	\end{array}
	\right| = -\left(\lambda - u_m\right)\left(\left(\lambda - u_m\right)^2 - \dfrac37 \a_1^2\right).
	$$
\end{proof}
Thus, defining $\lambda_{aux}^{\pm} = u_m \pm \sqrt{\dfrac37}\a_1$, the eigenvalue $\lambda_{aux}$ is defined as
$$
\lambda_{aux} = \left\{\begin{array}{ll}
\lambda_{aux}^+ & \mbox{if } |\lambda_{aux}^+| > |\lambda_{aux}^-|,\\[3mm]
\lambda_{aux}^- & \mbox{otherwise}.
\end{array}\right.
$$
Remark that when the first-order HSWEM model, i.e. $N=1$, is considered in the numerical test, we take $\lambda_{aux} = u_m$.

\section{Numerical test}\label{se:tests}
In this section several numerical tests are performed to validate the proposed model. Our goal is to show that the moment approach allows us, in a relatively simple way, to recover the vertical structure of the flow, making it possible to improve the approximation of the velocity close to the bottom, and consequently improve the bedload transport. Concretely, two typical tests are considered depending on the time-scale of the sediment transport: the movement of a dune (weak interaction) and dam-break problems (strong water-sediment interaction). We consider academic tests showing the differences between the standard SWE model and the third-order HSWEM model, and a comparison with laboratory experiments described in \cite{spinewine:2007} (also in \cite{gonzalezAguirre:2020,juez:2013}). For all the simulations, we set the kinematic viscosity $\nu = 0.01$ $m^2/s$, and the stability condition $CFL=0.9$ is fixed. In these tests, the only mechanism generating a vertical structure is the friction term. Then, the greater the friction is, the larger the vertical structure. Nevertheless, if this friction is too large, then the movement is also too slow and therefore the vertical structure is less relevant. 

\subsection{Large-time scale: dune test}
Let us start with a simple test case, where the characteristic time of the sediment problem is much smaller than the characteristic time of the hydrodynamic problem. To this end, we consider here a dune that is swept along by a water current. We take the same configuration as in \cite{fernandezNieto:2016}, with the difference that we include the friction force between the water and the sediment layer. The initial sediment layer, height and discharge are 
$$
b(0,x) = \left\{\begin{array}{ll}
0.2 \ m& \mbox{if }x\in [4 \ m,6 \ m],\\
0.1 \ m& \mbox{otherwise},
\end{array}\right.\qquad h(0,x) = 1 \ m - b(0,x)\ ,\qquad q(0,x) = 1.5\  m^2/s,
$$
and the sediment properties are
$$
\rho/\r_s = 0.34,\quad d_s = 1\ mm, \quad\theta_c = 0.047,\quad n=0.01,\quad \varphi = 0.95.$$ 
Now, the moment approach allows us to use different vertical profiles of the velocity at initial time (parabolic, linear or constant), all of them verifying 
$$
q(0,x) = \dint_0^1 h(0,x)u(0,x,\xi)d\xi = 1.5\ m^2/s,  
$$
that is, the mean discharge is equal for all the initial profiles. We take three different velocity profiles (see Figure \ref{fig:dune_t150})
\begin{equation}\label{eq:ini_vel_dune}
u(0,x,\xi) = a_0 + a_1\xi + a_2\xi^2\ m/s,
\end{equation}
given by
\begin{subequations}
\begin{equation}
a_2=0\ m/s,\quad a_1=0\ m/s,\quad a_0 = 1.5/h(x)\ m/s,\tag{constant case}
\end{equation}
\begin{equation}
a_2= 0\ m/s,\quad a_1 = 0.5\ m/s, \quad a_0=1.5/h(x)-a_1/2\ m/s,\tag{linear case}
\end{equation}
\begin{equation}
a_2 = 0.35\ m/s,\quad a_1=11/15\ m/s,\quad a_0 = 1.5/h(x)-a_2/3-a_1/2\ m/s.\tag{parabolic case}
\end{equation}
\end{subequations}

For this test we take $x\in[0,10]\ m$ with $800$ points, and subcritical boundary conditions are considered. The profile \eqref{eq:ini_vel_dune} is imposed upstream and the water height $h(t,10) = 1 - b(0,10)\ m$ is assumed downstream.

Let us remark that the case of very slow processes are characterized by a long final time in the simulation. Then, these problems require an accurate time discretization. Otherwise the numerical diffusion could hide the results. Thus, we consider in this case the third-order scheme defined by \eqref{eq:scheme}-\eqref{eq:diffusion} combined with a third order CWENO reconstructed states technique \cite{cravero2019optimal}, and the SI-RK3 time integrator \cite{chertock:15}. 

\medskip

\begin{figure}[!ht]
	\begin{center}
		\includegraphics[width=1.\textwidth]{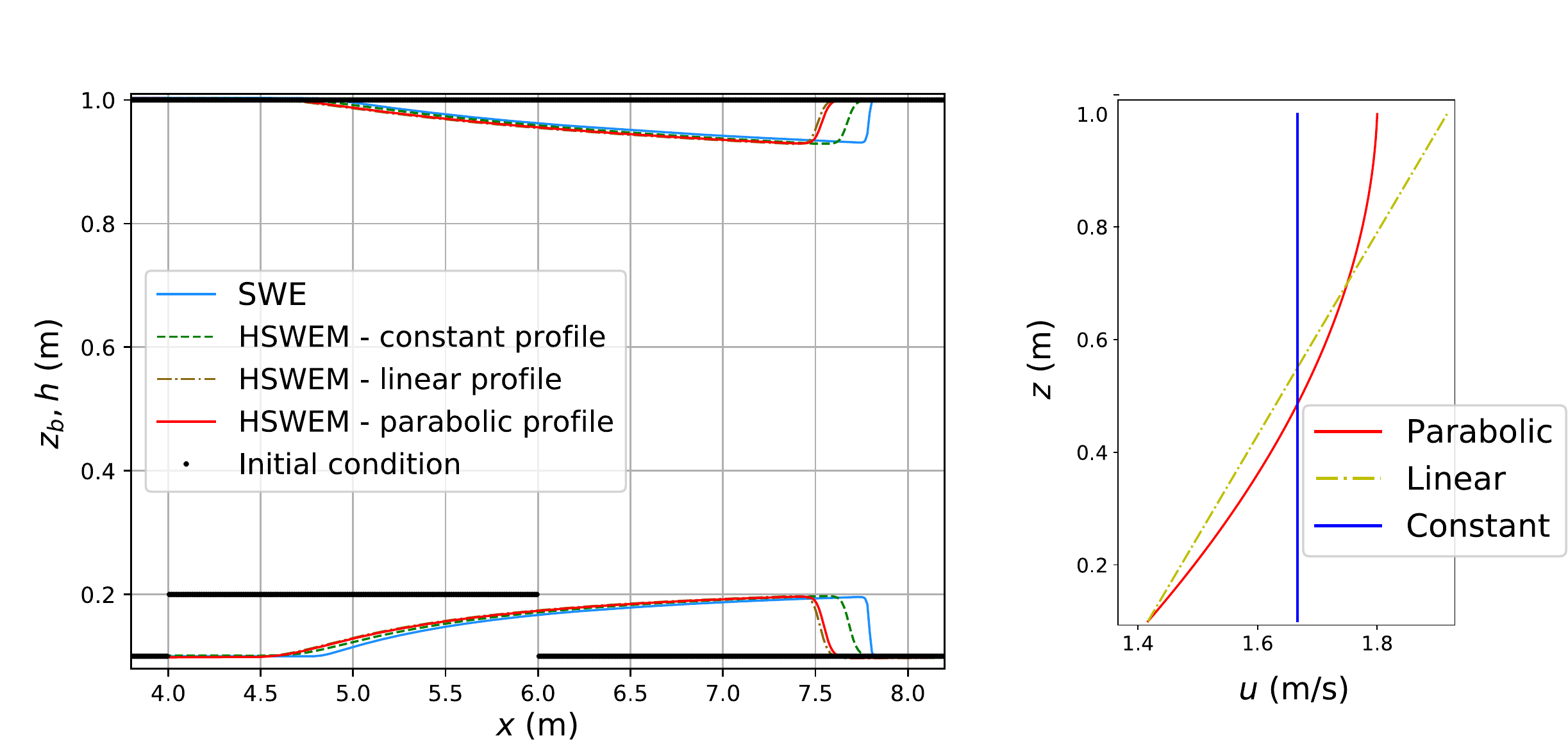}
	\end{center}
	\caption{\label{fig:dune_t150} \it{Left: Water surface and sediment layer at time $t=150$ s, for the SWE (solid blue line) model and the third-order HSWEM model with constant (dashed green line), linear (dot-dashed brown) and parabolic (solid red line) initial profile of velocity. Black symbols represent the initial condition. Right: Initial profiles of velocity.}}
\end{figure}

\begin{figure}[!ht]
	\begin{center}
		\includegraphics[width=0.85\textwidth]{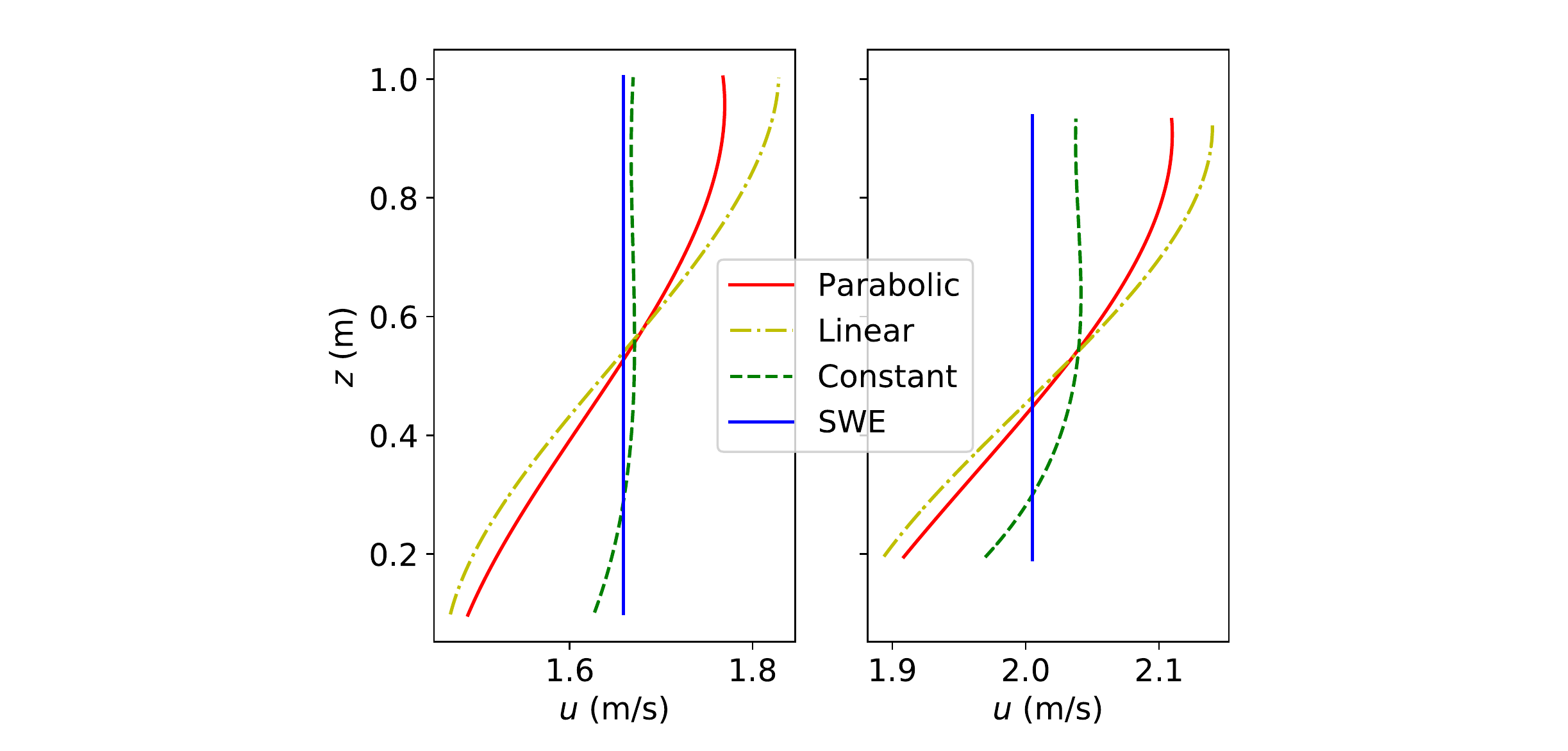}
	\end{center}
	\caption{\label{fig:dune_u_profiles} \it{Vertical profiles of velocity obtained at $t=150$ s, at point $x=3$ m (left hand side), and $x=7.25$ m (right hand side), with the SWE (solid blue line) and HSWEM models with constant (dashed green line), linear (dot-dashed brown) and parabolic (solid red line) initial profile of velocity.}}
\end{figure}
\begin{figure}[!ht]
	\begin{center}
		\includegraphics[width=1.\textwidth]{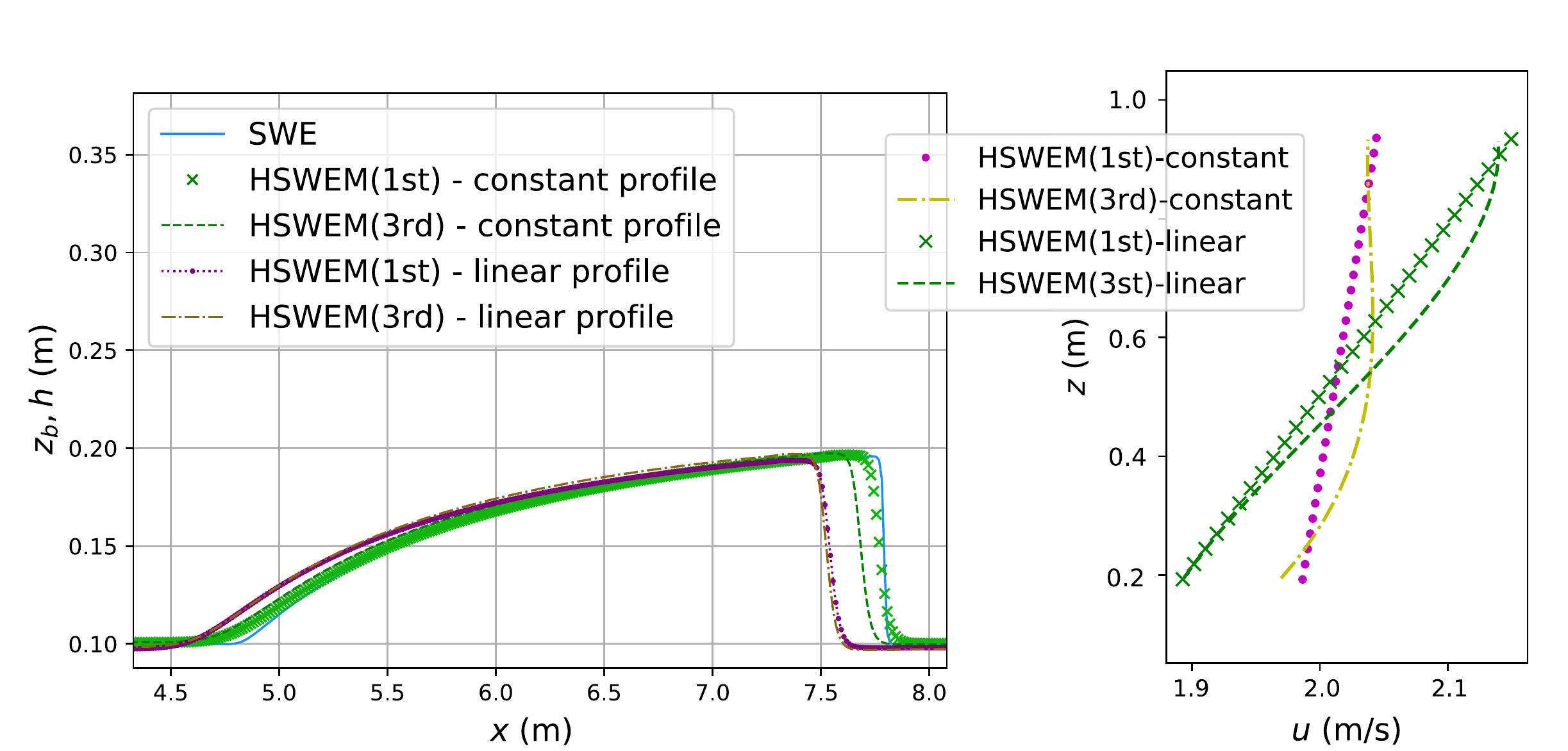}
	\end{center}
	\caption{\label{fig:dune_t150_hswm1} \it{Left: Sediment layer (left hand side) and vertical profile of velocity at $x=7.25$ m (right hand side), at time $t=150$ s, for the SWE (solid blue line) model, the third-order HSWEM model with constant (dashed green line), linear (dot-dashed brown) initial profile of velocity, and the first-order HSWEM model with constant (green crosses), linear (magenta circles) initial profile of velocity.}}
\end{figure}

Figure \ref{fig:dune_t150} shows the height and the sediment layer at final time, for the SWE model and third-order HSWEM model, for the different initial profiles of the velocity. Remark that a constant profile is the only possibility for the SWE model, whereas the HSWEM model allows us to define a constant, linear or parabolic profile of the velocity at initial time. We see that the position of the front for the HSWEM model is delayed when compared to the SWE model and for all initial velocity profiles. This is due to the fact that the friction with the sediment layer implies that the near bottom velocity decreases, as shown in Figure \ref{fig:dune_u_profiles}. Remark that, even though a constant  initial profile is prescribed, since we are using the third-order HSWEM model, after some time the velocity profile is no longer constant and a vertical structure appears. Despite the fact that the velocity computed with the HSWEM model is greater than the one computed with the SWE model in the upper half of the water column, it is the opposite near the bottom. Therefore, the movement of the sediment is slower for the HSWEM model. 

In addition, we see that the velocity near the bottom is slower if a linear or parabolic initial profile is used. In these cases, the sediment moves even slower since the computed velocity at the bottom is also smaller. The vertical structure reproduced by the third-order HSWEM model is similar in both cases. Therefore, the solutions only slightly differ, which is a consistent result. It could be interpreted as a convergence on the velocity profile. \\

In Figure \ref{fig:dune_t150_hswm1} we compare the third-order and first-order HSWEM models, with constant and linear initial profile of velocity. We see that for the linear initial profile, both models produce similar results, although the vertical profile of velocity is not linear but parabolic for the third-order HSWEM model after some time. More differences are found for a constant initial profile. We see that the results of the first-order model are similar to the SWE model, whereas the third-order model differs from these ones. Therefore, for constant initial profiles, the third-order model notably improve the results of the SWE and first-order HSWEM models.

\subsection{Short time scale: dam-break test}
We investigate now configurations where important changes occur at short times. Concretely, dam-break configurations are considered. First we show an academic test and secondly some comparisons with laboratory experiments are performed. For all the configurations, as usual in dam-break problems, it starts from water at rest, i.e. $q(0,x) = 0\ m^2/s$, where all $u_m,\a_1,\a_2,\a_3$ are zero. Note that imposing an initial vertical profile of velocity different from this one would lead to no physically relevant scenarios. In the following tests, free boundary conditions are considered. 

Let us firstly show an academic test, and secondly a variety of comparisons with laboratory experiments. 

\subsubsection{Academic dam-break test}\label{test:academic}
We consider here a dam-break problem in a simple configuration. The initial height is
$$
h(0,x) = \left\{\begin{array}{ll}
1\ m & \mbox{if }x<0,\\
0.05\ m & \mbox{otherwise}.
\end{array}\right. ,
$$
and the sediment is fixed as $b(0,x) = 0 \ m$. In this case, the sediment properties are
$$
\rho= 1000\ Kg/m^2,\quad \r_s = 1580\  Kg/m^2,\quad d_s = 3.9\  mm, \quad\theta_c = 0.047,\quad n=0.0365,\quad \varphi = 0.47,$$
and we take $x\in [-6,6]\ m$, with $1200$ points.

\begin{figure}[!ht]
	\begin{center}
		\includegraphics[width=0.65\textwidth]{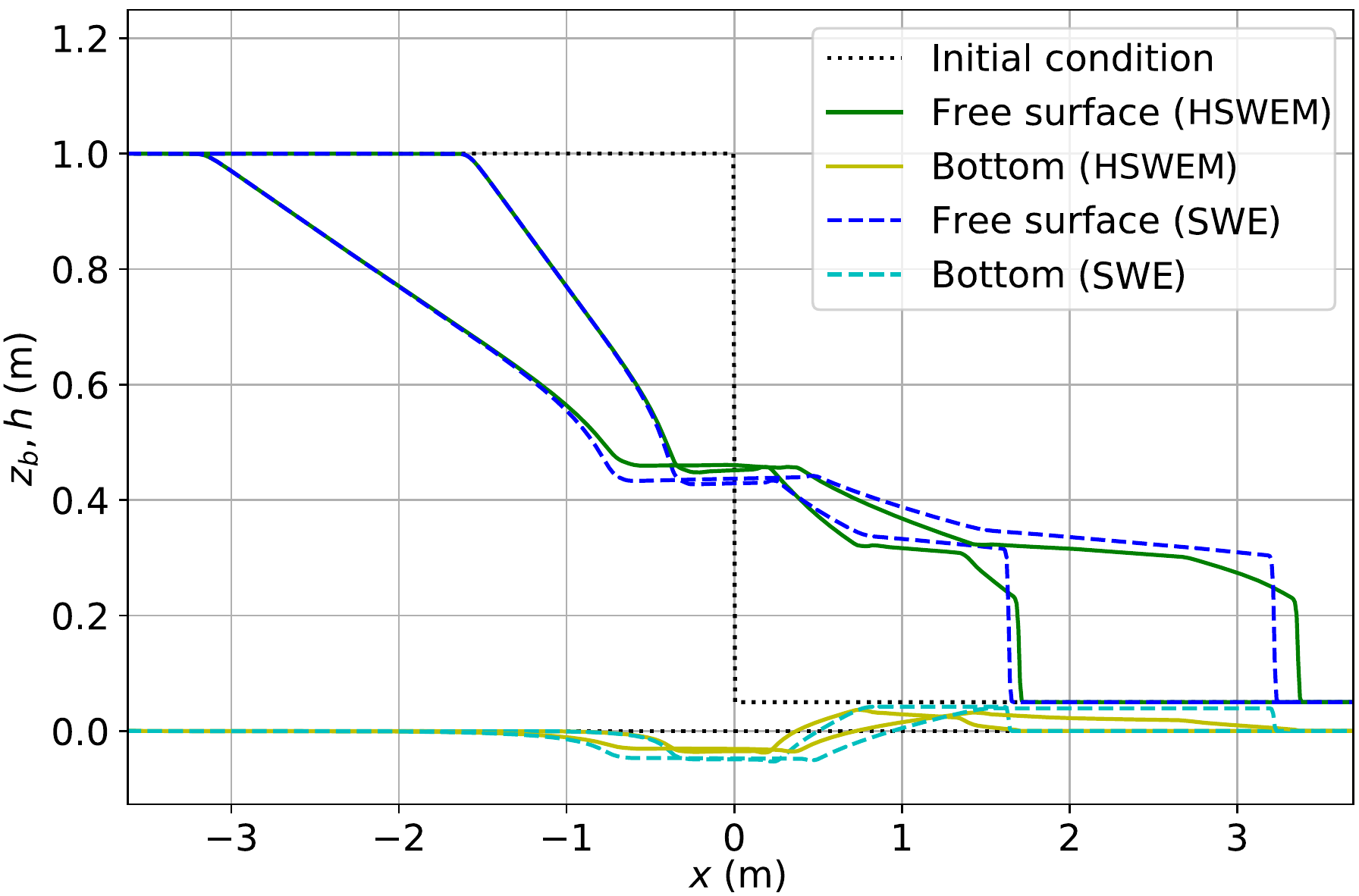}
	\end{center}
	\caption{\label{fig:dam_break_academic} \it{Water surface and sediment layer at times $t=0.5 \ s, 1 \ s$ computed with the HSWEM (solid lines) and SWE (dashed lines) models. Dotted lines represent the initial condition.} }
\end{figure}

\begin{figure}[!ht]
	\begin{center}
		\includegraphics[width=0.66\textwidth]{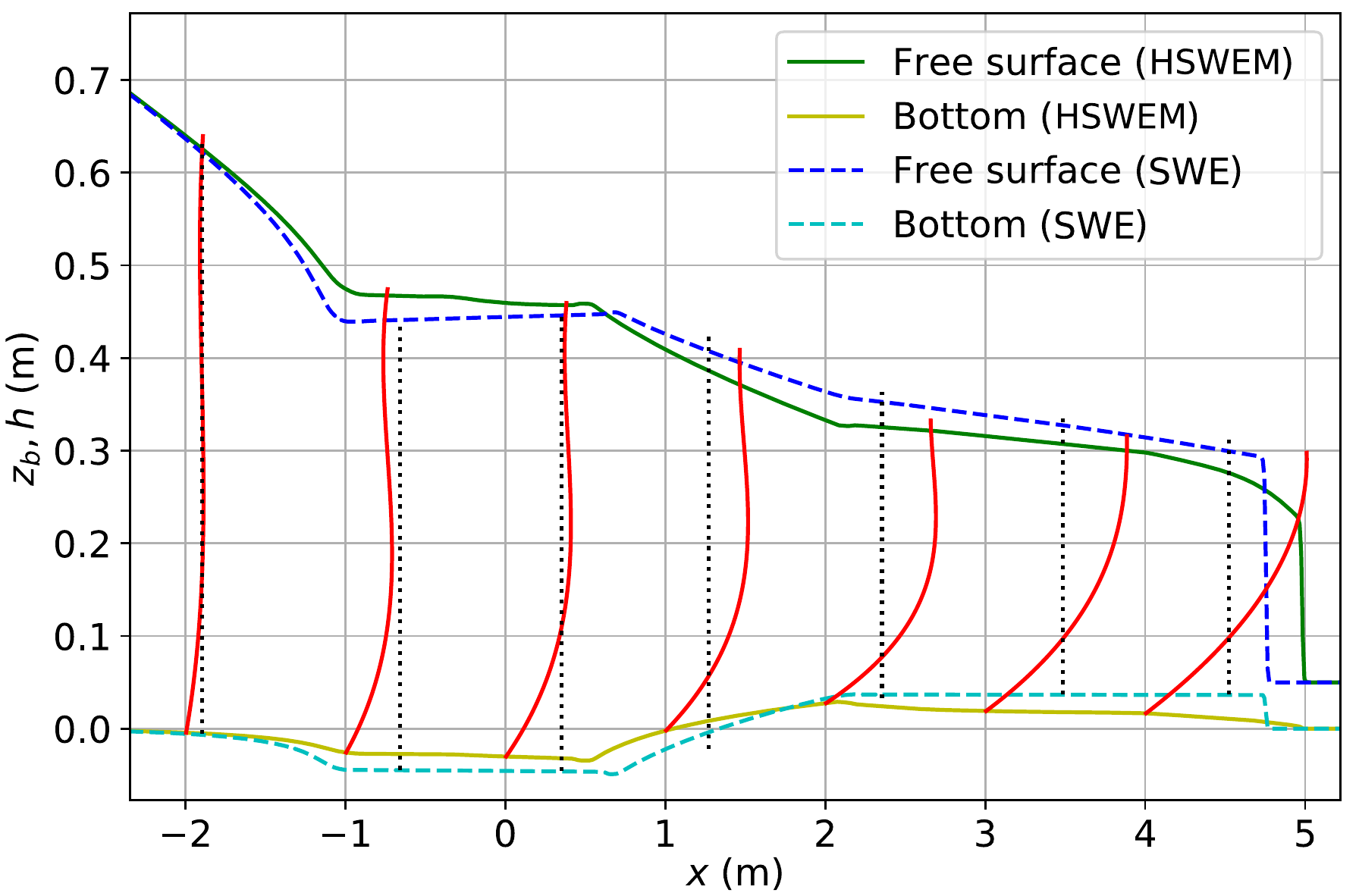}
	\end{center}
	\caption{\label{fig:dam_break_academic_u} \it{Water surface and sediment layer at time $t=1.5$ s computed with the HSWEM  (solid lines) and SWE (dashed lines) models. Solid red (resp. dotted black) lines are the vertical profiles of velocity computed with the HSWEM (resp. SWE) model at points $x=-2,-1,...,4$ m.}}
\end{figure}

Figure \ref{fig:dam_break_academic} shows the water height and the sediment layer at different times. We see that the approximations of both the free surface and the sediment layer with the third-order HSWEM model differ from the approximations given by the SWE system. In particular, they are different close to the front, where both the velocity and the vertical structure (i.e. the deviation of a constant profile) are greater. Focusing on the sediment layer, we see that the HSWEM model produces a more realistic shape of the bottom, removing the vertical column appearing with the SWE model. This is also shown in Figure \ref{fig:dam_break_academic_u}, where the vertical profiles of velocity at several points are depicted, at time $t=1.5$ s. The HSWEM model allows us to recover the typical parabolic profile of the velocity, namely near the front. Then, the velocity at the bottom, that is the one used to transport the sediment, is smaller than the averaged velocity computed with the SWE model. This effect decreases the bedload transport, notably improving the shape of the sediment layer. However, the velocity along almost the whole water column is greater with the HSWEM model, leading to a more advanced position of the front. Concretely, the mean velocity computed with the HSWEM model is approximately 7$\%$ greater than the velocity obtained with the SWE model, for $x\geq1$ m at $t=1.5$ s. This is a qualitatively correct and expected behaviour.\\

Here, we also show how the approximation of the eigenvalues described in Section \ref{se:num_aprox} fails if no modification of the procedure is taken into account, together with the same configuration taking care of the initial seed for the Newton method approximating the lowest eigenvalue. In Figure \ref{fig:dam_break_academic_eigen_mod} we show this comparison, where the standard procedure gives three positive eigenvalues in some nodes, leading to the appearance of spurious oscillations in the simulation. When the proposed correction is considered it is possible to ensure at least one positive and one negative eigenvalue and the larger and smaller eigenvalues are effectively bounds of the exact eigenvalues. The stability condition is then fulfilled and all oscillations disappear, showing the efficiency of the proposed numerical method. 

\begin{figure}[!ht]
	\begin{center}
		\includegraphics[width=0.49\textwidth]{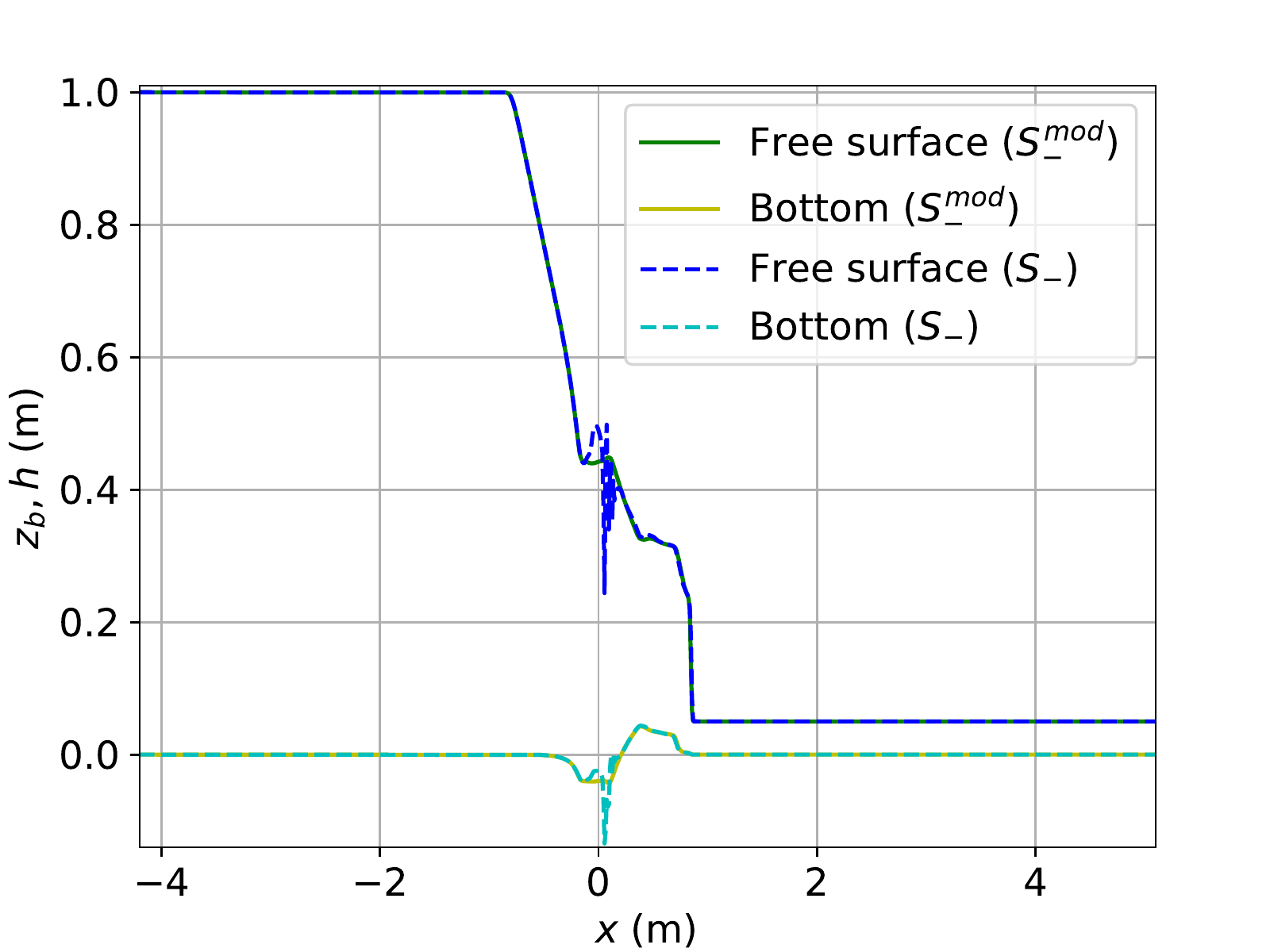}
		\includegraphics[width=0.49\textwidth]{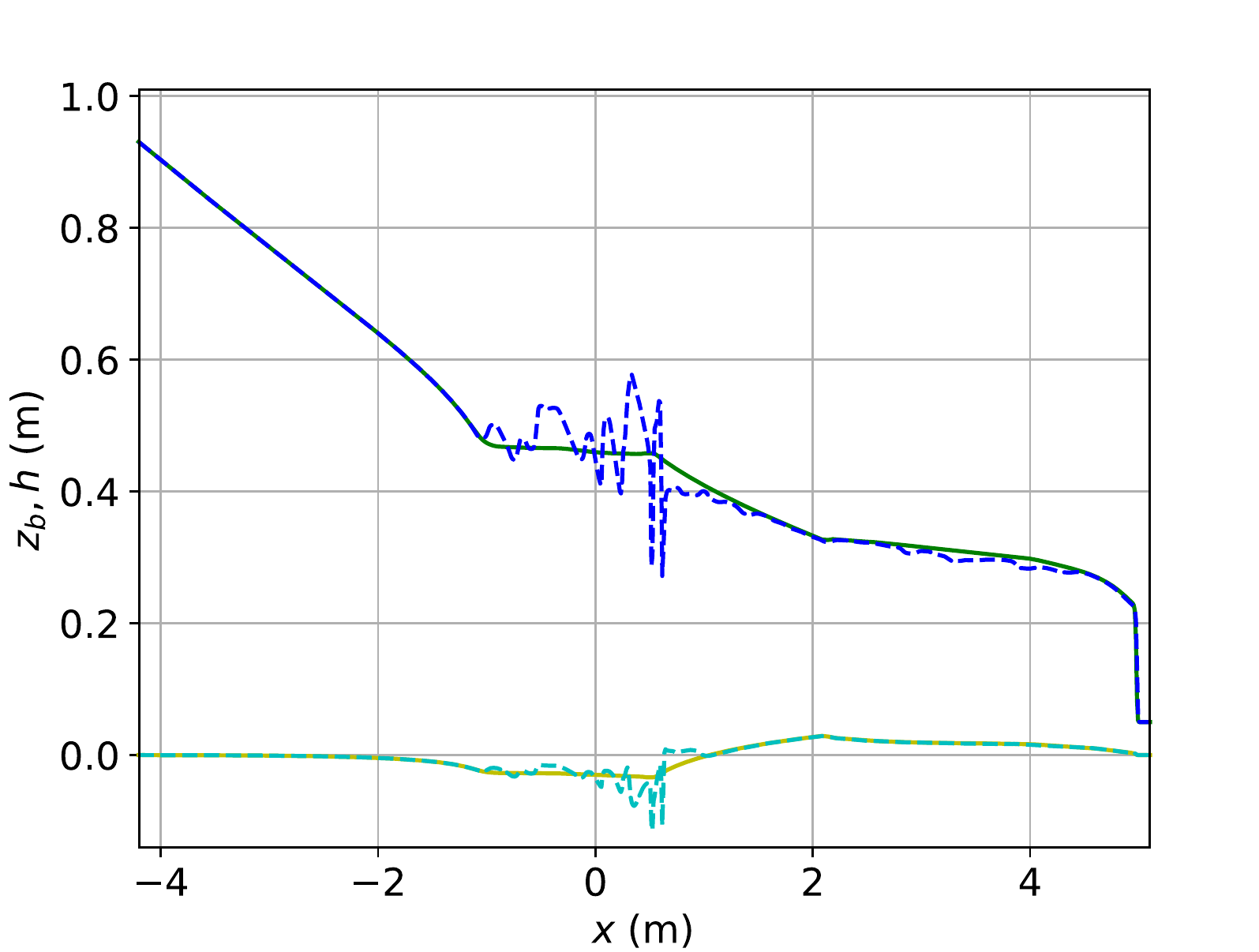}
	\end{center}
	\caption{\label{fig:dam_break_academic_eigen_mod} \it{Free surface and bottom profiles at times $t=0.25,0.5,1,1.5$ $t=0.25,1.5$ s computed with the HSWEM model modifying the initial seed for the lowest eigenvalue when needed (solid lines), and without modifying it (dashed lines).} }
\end{figure}

\subsubsection{Laboratory dam-break experiments}

Now, we compare our results with laboratory experiments of dam-break problems in \cite{spinewine:2007}, where two different materials are used, uniform coarse sand and extruded PVC pellets.\\

\underline{Experiment 1 (PVC pellets):} First we consider the experiments with PVC pellets, for which the initial height is given by
\begin{equation}\tag{Exp 1}\label{eq:exp1}
h(0,x) = \left\{\begin{array}{ll}
0.35\ m & \mbox{if }x<0,\\
0\ m & \mbox{otherwise},
\end{array}\right.
\end{equation}
and the bottom is set to $b(0,x) = 0\ m$.  The sediment properties are
$$
\rho= 1000\ Kg/m^2,\quad \r_s = 1580\ Kg/m^2,\quad d_s = 3.9\ mm, \quad\theta_c = 0.047,\quad n=0.0165,\quad \varphi = 0.47.$$
In this case, for the Manning friction law we consider $n=0.0365$. We take $x\in [-3,3] \ m$ discretized by $500$ points.

\begin{figure}[!ht]
	\begin{center}
		\includegraphics[width=0.75\textwidth]{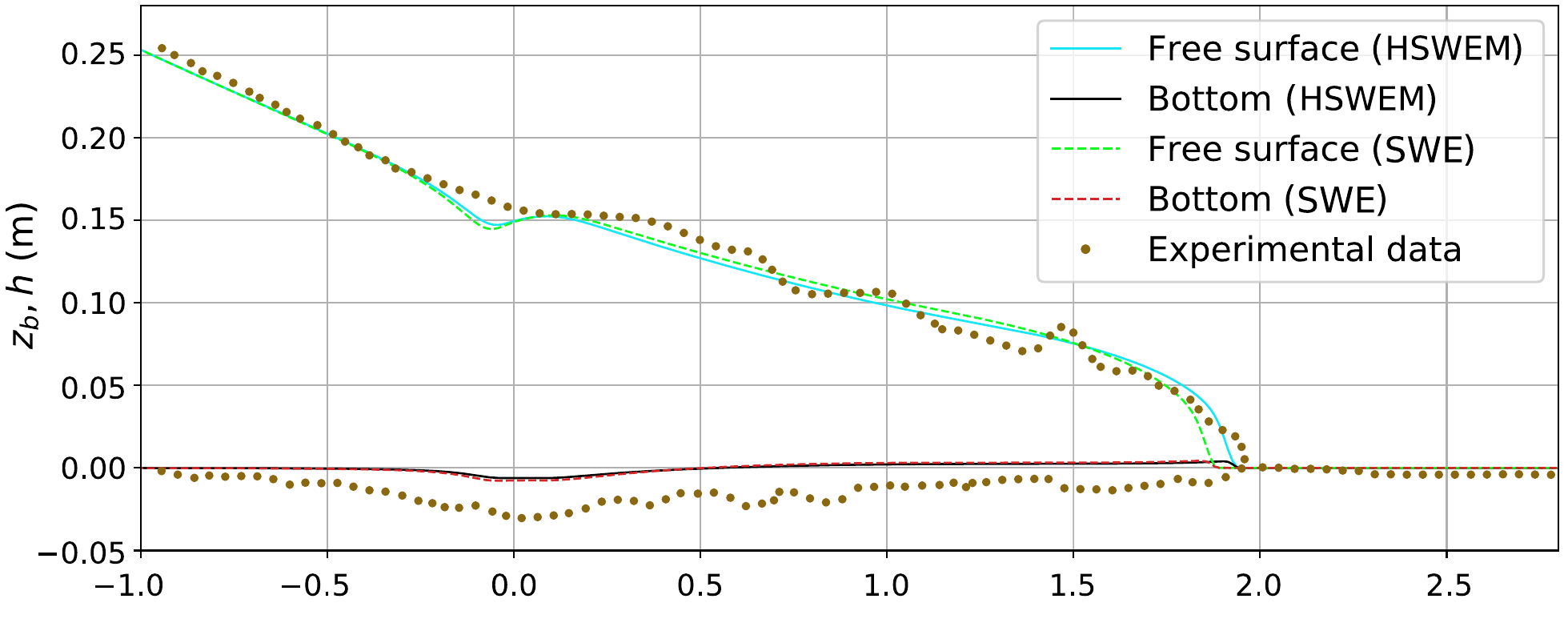}
		\includegraphics[width=0.75\textwidth]{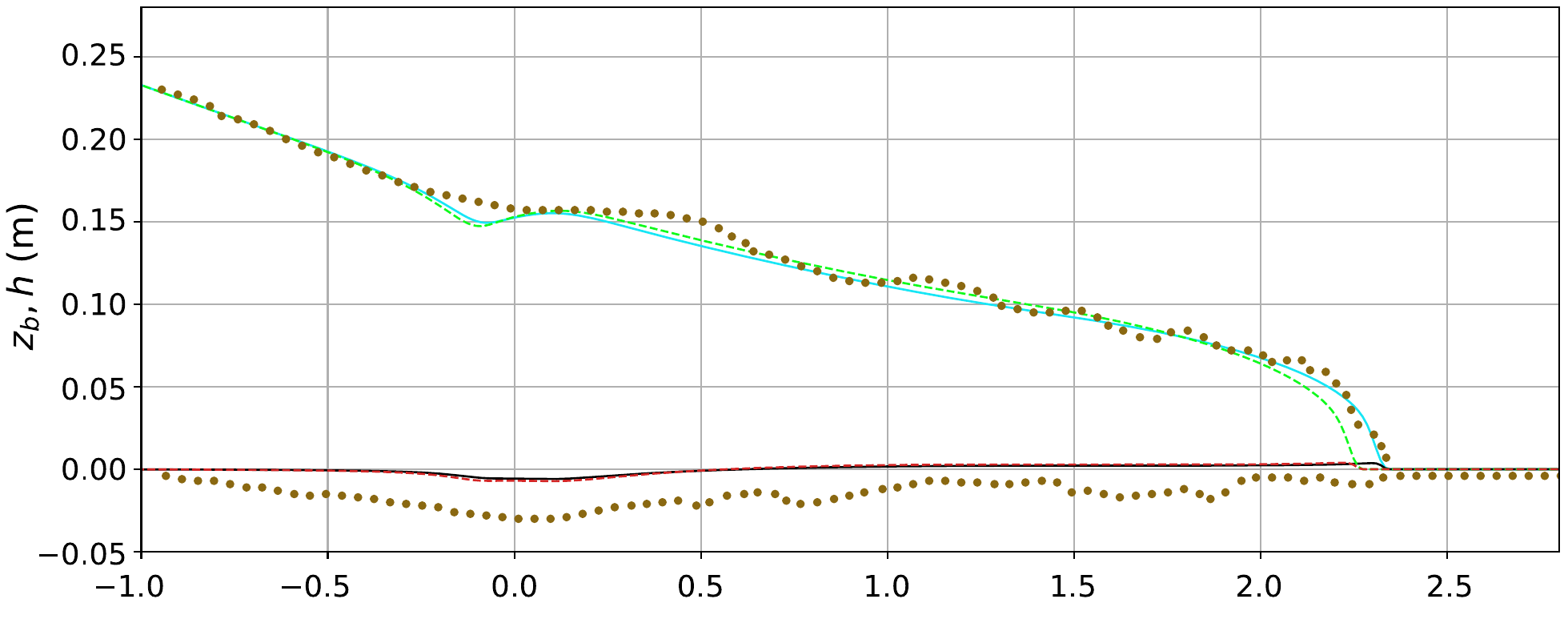}
		\includegraphics[width=0.75\textwidth]{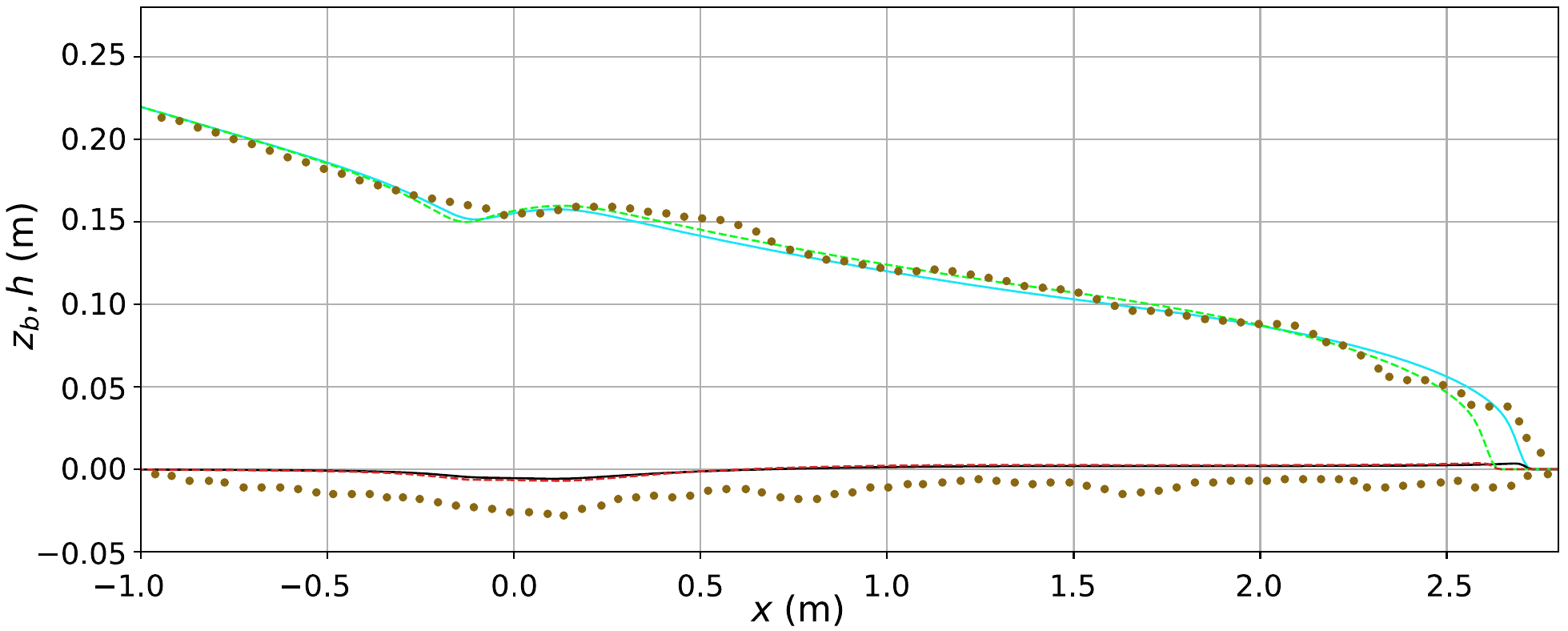}
	\end{center}
	\caption{\label{fig:exp3_data} \it{Dam break experiments and simulations at times $t=1,1.25,1.5$ s, for configuration \eqref{eq:exp1} (experiment 3 in \cite{gonzalezAguirre:2020}). } }
\end{figure}
\begin{figure}[!ht]
	\begin{center}
		\includegraphics[width=0.75\textwidth]{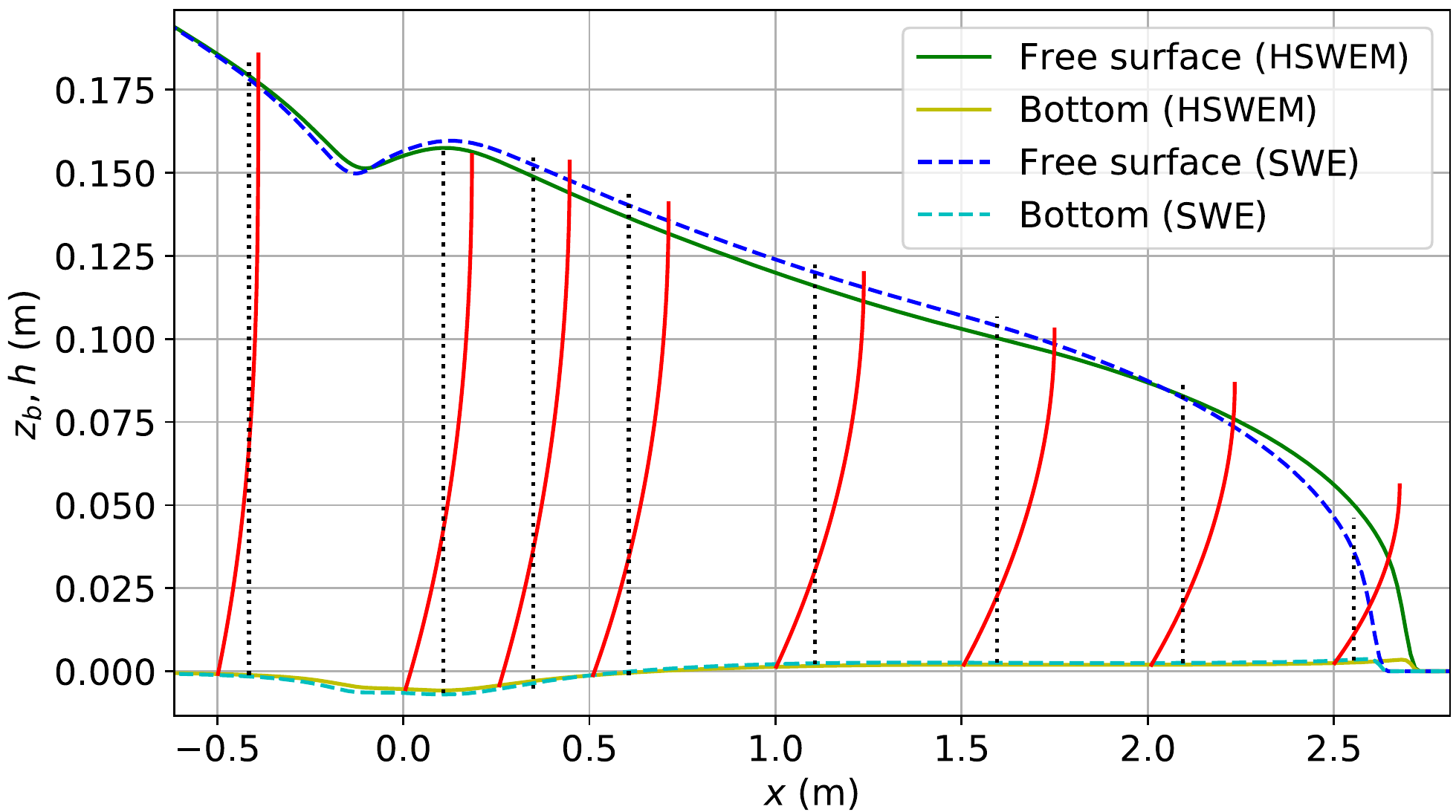}
		\end{center}
	\caption{\label{fig:exp3_u} \it{Dam break experiments and simulations at time $t=1.5$ s, for configuration \eqref{eq:exp1}, with vertical profiles of velocity at $x=-0.5\ m,0\ m,...,2.5\ m$, computed with the HSWEM (solid red lines) and SWE (dotted black lines) models.}}
\end{figure}

Figure \ref{fig:exp3_data} shows that the HSWEM model significantly improves the position of the moving front in comparison to the standard SWE model. However, the rest of the solution shows almost no difference to the SWE model. The bottom topography is not accurately captured by both models, due to neglecting erosion, deposition and suspension. As the SWE model has been chosen as starting point for the HSWEM model, we are only considering bedload sediment transport while in this test the erosion, deposition and suspended sediment transport are relevant (see for example \cite{gonzalezAguirre:2020}). Consequently, we cannot expect the model to exactly recover the erosion produced at the bottom in the laboratory results. Our focus is on a more accurate description of the vertical velocity profile, which will indeed result in an improvement description of the bedload transport (see Figure \ref{fig:exp3_u}). 
In Figure \ref{fig:exp3_u} we see the same behaviour as in previous test, that is, the computed mean velocity near the front is greater with the HSWEM model than with the SWE model, leading to a faster and more accurate movement of the water surface. On the contrary, the velocity close to the bottom is smaller with the HSWEM system.

\bigskip

\underline{Experiment 2, 3 (coarse sand):}
In this case two different configurations are considered  (\ref{eq:exp2}, \ref{eq:exp3}), where the initial height is given by

\medskip

	\begin{equation}\tag{Exp 2}\label{eq:exp2}
	h(0,x) = \left\{\begin{array}{ll}
	0.25\ m & \mbox{if }x<0,\\
	0\ m & \mbox{otherwise},
	\end{array}\right.
	\end{equation}
		\begin{equation}\tag{Exp 3}\label{eq:exp3}
	h(0,x) = \left\{\begin{array}{ll}
	0.25\ m & \mbox{if }x<0,\\
	0.1\ m & \mbox{otherwise},
	\end{array}\right.
	\end{equation}

\medskip

\noindent and the bottom is $$
b(0,x) = \left\{\begin{array}{ll}
0.1 \ m& \mbox{if }x<0,\\
0 \ m& \mbox{otherwise}.
\end{array}\right.
$$ 
In this case, the sediment properties are
$$
\rho= 1000\ Kg/m^2,\quad \r_s = 2683\ Kg/m^2,\quad d_s = 1.82\ mm, \quad\theta_c = 0.047,\quad n=0.0165,\quad \varphi = 0.47.$$
For the Manning friction law we consider $n=0.0165$ and we take $x\in [-1.5,3.5]\ m$ with $500$ points.

Figure \ref{fig:juez_zb} shows the results at time $t=1.5$ s. We do not see differences between the HSWEM and the SWE models. This is due to the fact that the friction is not large enough to generate a vertical structure of the flow. This is shown in Figure \ref{fig:juez_zb_perfiles}, where we see the vertical profiles of velocity along the $x$ direction. It shows that the vertical profiles are very close to a constant profile in almost the whole domain. Only for $x\in[0,0.5]$ there is some vertical structure, but it is not large enough to produce significant differences between the results obtained with both models. 

In Figure \ref{fig:juez_zb_perfiles_FR}, we have increased the Manning coefficient, only for the friction term, to $n=0.0365$. This is to show the impact of the friction term on the vertical structure of the flow. In that case, differences between both models appear. The vertical structure of the fluid is now more relevant. Notice that this is especially the case in areas where the Froude number ($u/\sqrt{gh}$) is large.

\begin{figure}[!ht]
	\begin{center}
		\includegraphics[width=0.75\textwidth]{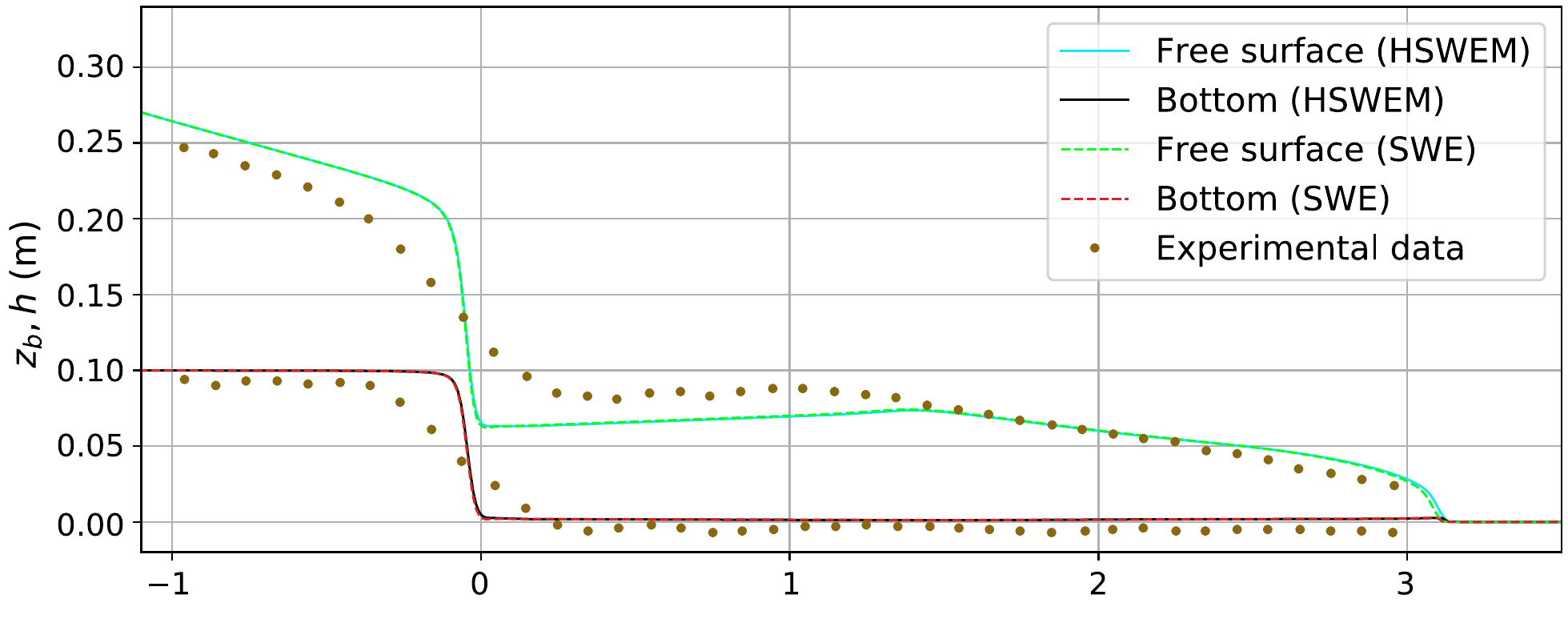}
		\includegraphics[width=0.75\textwidth]{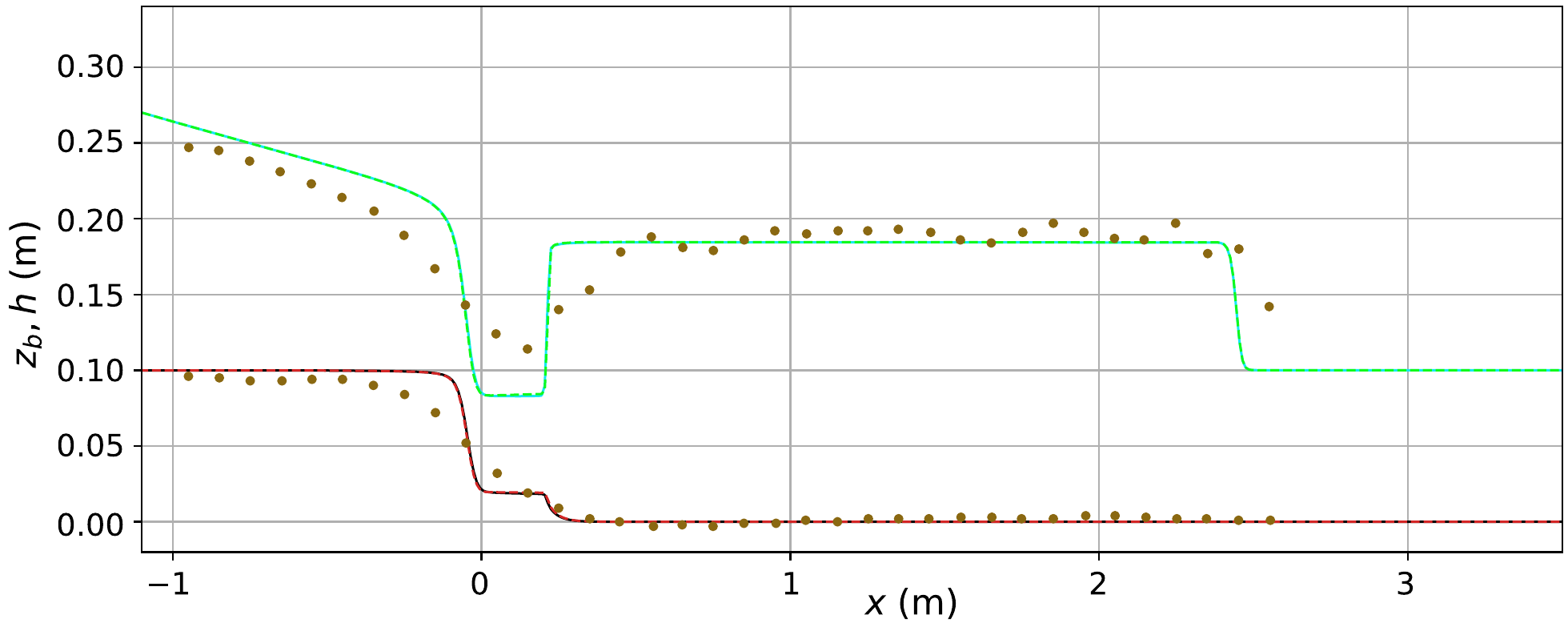}
	\end{center}
	\caption{\label{fig:juez_zb} \it{Dam break experiments and simulations at time $t=1.5$ s, for configurations \ref{eq:exp2} and \ref{eq:exp3} (test case A and B in \cite{juez:2013}). }}
	\end{figure}

	\begin{figure}[!ht]
		\begin{center}
			\includegraphics[width=0.75\textwidth]{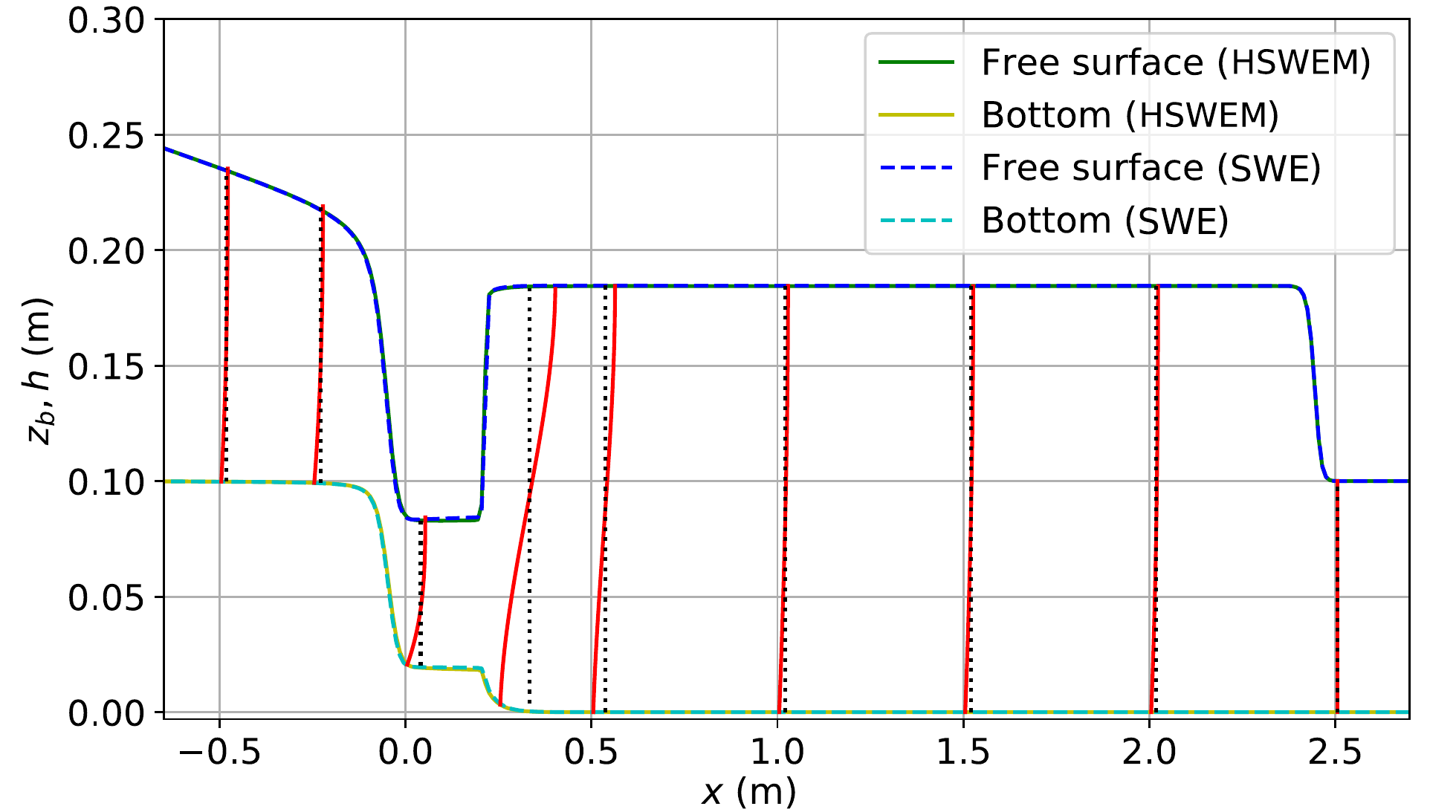}
		\end{center}
		\caption{\label{fig:juez_zb_perfiles} \it{Dam break experiments and simulations at time $t=1.5$ s, with vertical profiles of velocity at $x=-0.5,-0.25,0,0.25,0.5,1,...,2.5$ m, computed with the HSWEM (solid red lines) and SWE (dotted black lines) models.} }
	\end{figure}

\begin{figure}[!ht]
	\begin{center}
		\includegraphics[width=0.75\textwidth]{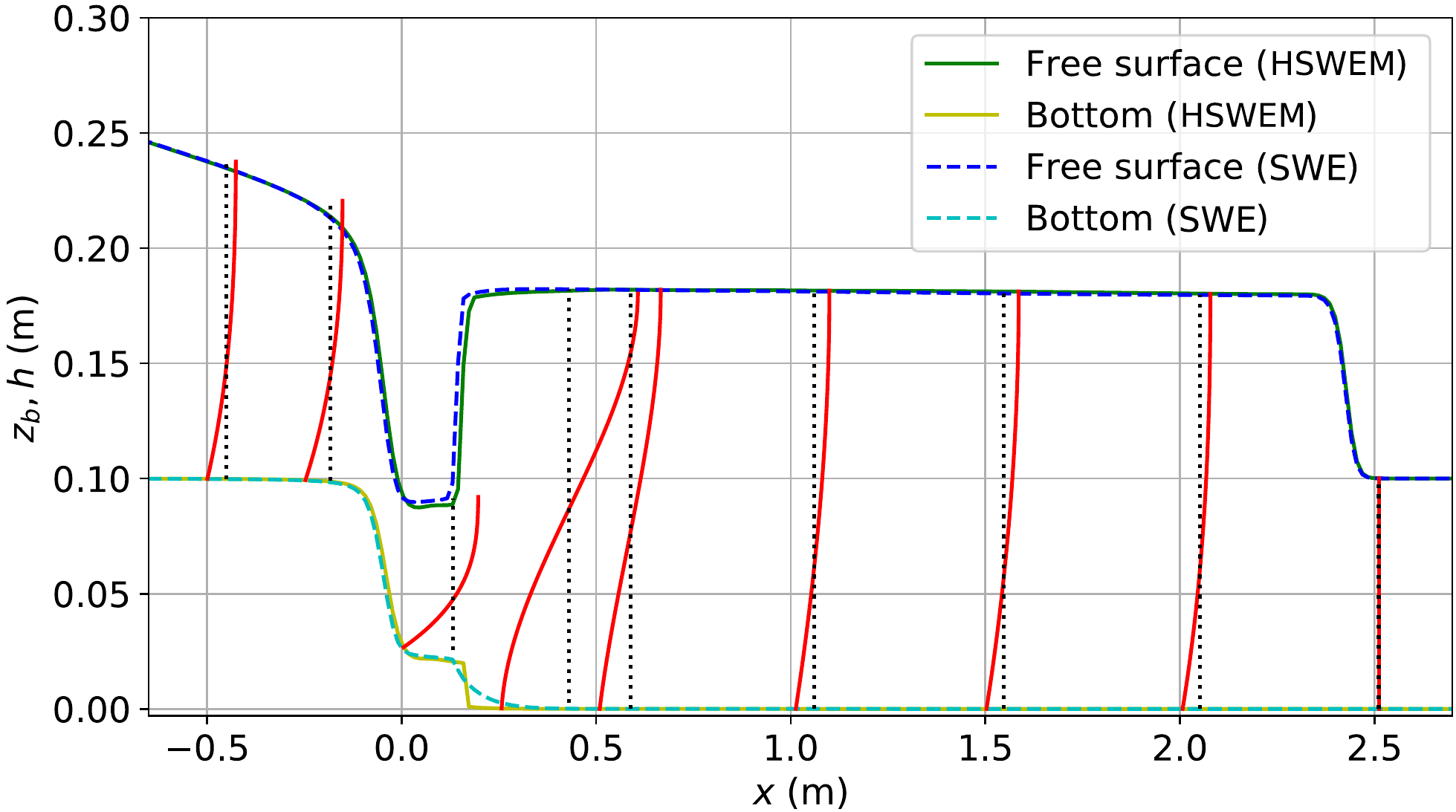}\\[3mm]
		\includegraphics[width=0.75\textwidth]{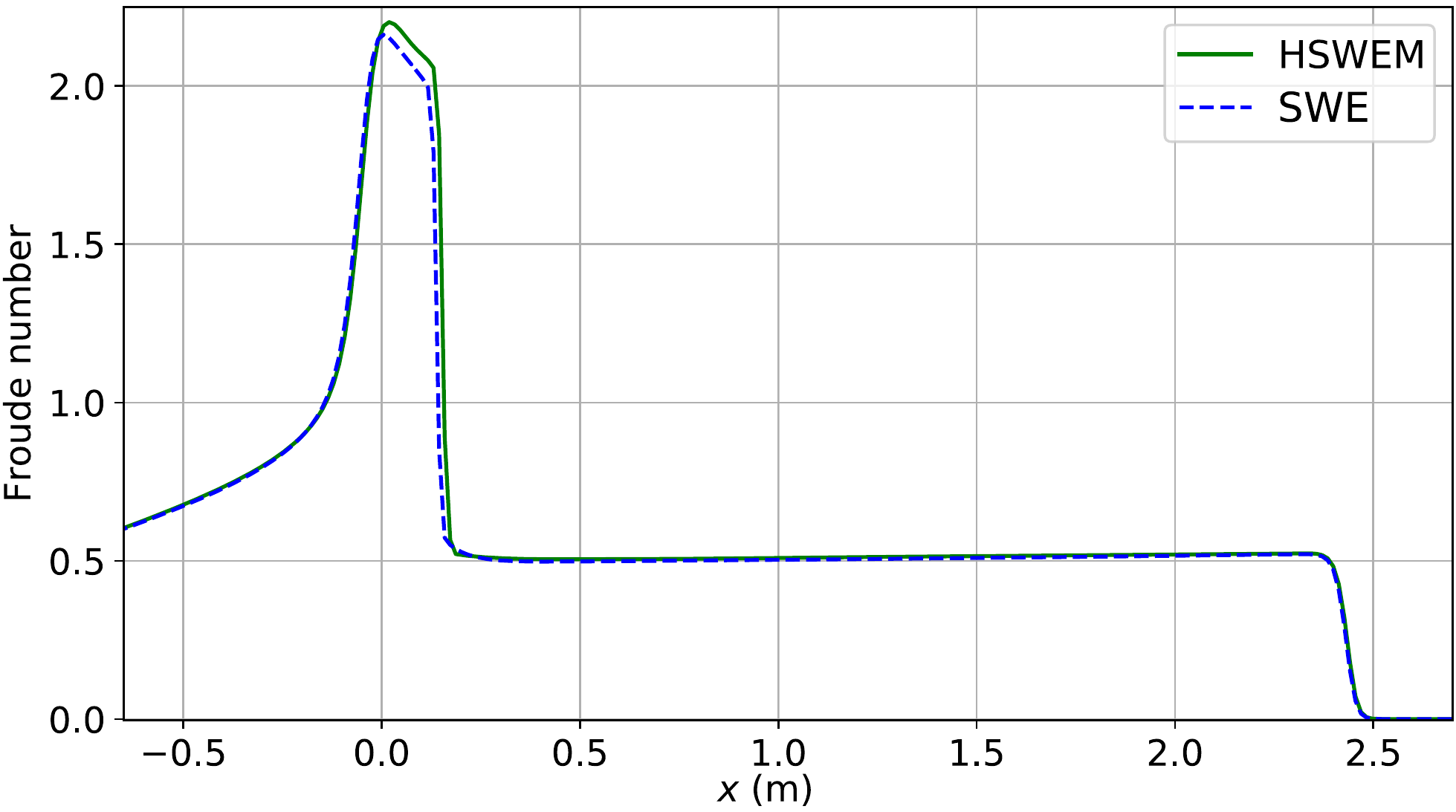}
	\end{center}
	\caption{\label{fig:juez_zb_perfiles_FR} \it{Dam break simulations at time $t=1.5$ s, increasing the Manning friction coefficient $n=0.0365$, with vertical profiles of velocity at $x=-0.5,-0.25,0,0.25,0.5,1,...,2.5$ m, computed with the HSWEM (solid red lines) and SWE (dotted black lines) models.} }
\end{figure}

\section{Conclusions}\label{se:conclusions}
A simple but accurate model for bedload transport, which is obtained as result of applying the moment approximation to the Shallow Water Exner model, has been proposed. The moment approach allows us to recover the vertical velocity profile of the fluid, making it possible to better approximate the velocity close to the bottom. This is the relevant velocity for bedload transport leading to a more accurate definition of the solid transport discharge, and therefore of the sediment transport.

The model considers the Meyer-Peter $\&$ M\"uller definition for the solid transport discharge, together with the Manning friction law at the bottom. This friction law is also considered for the hydrodynamics part of the system. One of the novelties of this work is the derivation of Shallow Water Moment models with a nonlinear friction.

Here, the \textit{Hyperbolic} version of the Shallow Water Moment model, which is always a hyperbolic system (without sediment transport), is used. This model is based on a linearization of the system matrix around linear velocity profiles and can be written as the Shallow Water Moment model with a modification of the nonconservative terms. The combination with sediment transport leads to the new HSWEM model, for which we performed an analysis of the eigenvalues. Most of the eigenvalues are explicitly known, while the remaining eigenvalues correspond exactly to those of the SWE system. 

In order to approximate the model, a numerical scheme based on the IFCP method developed in \cite{gonzalezAguirre:2020} is proposed. We have proposed a coupled discretization to be able to deal with both weak and strong water-sediment interactions. A key point is the approximation of the eigenvalues of the system matrix. Thanks to Theorem \ref{TheoremEV}, this problem is reduced to approximate the eigenvalues of the Shallow Water Exner model. In order to give these approximations, we follow the idea of \cite{cordier:2011}, although showing that this approach fails in some situations, giving three positive eigenvalues (case $u_m>0$), thus leading to the appearance of spurious oscillations in the simulations. A simple modification of this technique has been proposed to ensure that the given approximation verifies one important property for SWE: two eigenvalues have the same sign while the other is of opposite sign. This is crucial for the stability of the scheme. 

Several academic tests have been presented, showing firstly a configuration with a weak fluid-sediment interaction. Concretely, the case of a simple dune that is transported. We have shown that the profiles of the dune and the water surface notably change when the HSWEM model is considered. This is a consequence of the better approximation of the velocity close to the bottom. This is also true even starting from an initially constant profile, since after some time the vertical profile of velocity will be not be constant any more. Secondly, dam-break configurations, where a strong water-sediment interaction occurs, have been showed. The sediment deposits approximated with the HSWEM model exhibits a more realistic shape than with the SWE model, namely at the front where the typical parabolic profile of velocity is recovered. We have also compared our results with laboratory experiments and showed that the results of both models (SWE and HSWEM) only slightly differ. This is due to the fact that the laboratory experiments used are such that the friction force is not large enough to produce a vertical structure in these fast and short time simulations. As it is shown, for increased friction coefficients or larger times, the differences are more noticeable. Interestingly, the major differences take place in the areas where the Froude number is also large, namely where there is a supercritical regime. 

Future works will consider adding erosion-deposition effects to this model, which could notably improve the results when comparing with laboratory experiments where erosion of the bed is important as it is usually the case in dam break problems. Another direction for further research is the application of the filtered moment model in \cite{Fan:2020}, to further reduce the complexity of the moment model and speed up simulation time.

\section*{Acknowledgements}
This research has been partially supported by the Spanish Government and FEDER through the research project RTI2018-096064-B-C2(1/2) and the European Union's Horizon 2020 research and innovation program under the Marie Sklodowska-Curie grant agreement no. 888596. J. Koellermeier is a postdoctoral fellow in fundamental research of the Research Foundation - Flanders (FWO), funded by FWO grant no. 0880.212.840.

\appendix
\section{Proof of HSWEM characteristic polynomial}
\label{app:charPoly}

\begin{theorem}
    \label{appTheoremEV}
    The HSWEM system matrix $\bd{A}_H(\bd{W}) \in \mathbb{R}^{(N+3)\times(N+3)}$ \eqref{e:HSWEM} has the following characteristic polynomial
    \begin{equation*}
        \chi_{A} (\lambda) = \left[(-\lambda)\left( \left(\lambda - u_m\right)^2 - \left(gh+\a_1^2\right) \right) + gh (\dqh + \lambda \dqq + 2\alpha_1 \dqq) \right] \cdot \chi_{A_2} (\lambda - u_m),
    \end{equation*}
    where the matrix $A_2 \in \mathbb{R}^{N \times N}$ is defined as follows

    \begin{equation}
        A_{2}=\left(
        \begin{array}{ccccc}
            & c_2   &       & \\
        a_2 &       & \ddots& \\
            & \ddots&       & c_N  \\
            &       & a_N   &  \\
        \end{array}
        \right),
    \end{equation}
    with values $c_{i}=\frac{i+1}{2i+1}\alpha_1$ and $a_i=\frac{i-1}{2i-1}\alpha_1$ the values above and below the diagonal, respectively, from \eqref{e:HSWEM}.
\end{theorem}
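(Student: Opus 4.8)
The plan is to compute $\det(\bd{A}_H(\bd{W})-\lambda\bd{I})$ directly, exploiting the sparsity of the coupling between the hydrodynamic/sediment part and the moment part of \eqref{e:HSWEM}. First I would reorder the unknowns as $(h,hu_m,b,h\alpha_1,\dots,h\alpha_N)$; applying the same permutation to rows and columns leaves the determinant unchanged and puts $\bd{A}_H-\lambda\bd{I}$ into block form $\bigl(\begin{smallmatrix}\bd{P}_\lambda & \bd{Q}\\ \bd{S} & \bd{M}_\lambda\end{smallmatrix}\bigr)$, where the $N\times N$ block is $\bd{M}_\lambda=u_m\bd{I}_N+A_2-\lambda\bd{I}_N=A_2-(\lambda-u_m)\bd{I}_N$, so that $\det\bd{M}_\lambda=\chi_{A_2}(\lambda-u_m)$ by the very definition of $A_2$. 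The $3\times3$ block $\bd{P}_\lambda$ is the Shallow Water Exner transport matrix in the variables $(h,hu_m,b)$ minus $\lambda\bd{I}_3$, except that its $(2,1)$ entry is $gh-u_m^2-\tfrac{1}{3}\alpha_1^2$ instead of $gh-u_m^2$; and the coupling blocks are almost empty: in $\bd{Q}$ the momentum row meets the moment columns only through the entry $\tfrac{2}{3}\alpha_1$ in the first moment column, the Exner row of $\bd{Q}$ is the constant vector $(\dqq,\dots,\dqq)$ (the feature stressed right after \eqref{e:HSWEM}), and in $\bd{S}$ only the $\alpha_1$- and $\alpha_2$-equations reach the $(h,hu_m)$-columns, through $(-2u_m\alpha_1,2\alpha_1)$ and $-\tfrac{2}{3}\alpha_1^2$ respectively.

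Next I would apply the Schur complement identity, valid whenever $\lambda$ is not an eigenvalue of $u_m\bd{I}_N+A_2$,
\begin{equation*}
\det(\bd{A}_H-\lambda\bd{I})=\chi_{A_2}(\lambda-u_m)\cdot\det\!\bigl(\bd{P}_\lambda-\bd{Q}\,\bd{M}_\lambda^{-1}\,\bd{S}\bigr).
\end{equation*}
Because $\bd{Q}$ is supported on its last two rows and $\bd{S}$ on its first two rows, the correction $\bd{Q}\bd{M}_\lambda^{-1}\bd{S}$ is a $3\times3$ matrix with vanishing first row and vanishing third column, and its entries involve only four scalars: the $(1,1)$ and $(1,2)$ entries of $\bd{M}_\lambda^{-1}$ (from the momentum row of $\bd{Q}$, which only sees the first moment column) and the first- and second-column sums $\mathbf{1}^{T}\bd{M}_\lambda^{-1}e_1$ and $\mathbf{1}^{T}\bd{M}_\lambda^{-1}e_2$ (from the all-$\dqq$ Exner row of $\bd{Q}$). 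Expanding the $3\times3$ determinant and comparing with $\det\bd{P}_\lambda$, the whole identity reduces to verifying that the correction contributes exactly $\tfrac{4}{3}\lambda\alpha_1^2+2gh\,\alpha_1\dqq$: this turns the $+\tfrac{1}{3}\alpha_1^2$ appearing inside $\det\bd{P}_\lambda$ into $-\alpha_1^2$ and inserts the $2\alpha_1\dqq$ term, producing precisely the first factor of the statement.

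The algebraic core — and the step that needs the most care — is the evaluation of those four scalars. Writing $\mu=\lambda-u_m$, the first column of $\bd{M}_\lambda=A_2-\mu\bd{I}_N$ is $(-\mu,a_2,0,\dots,0)^{T}$ with $a_2=\tfrac{1}{3}\alpha_1$. Reading off the $(1,1)$ entry of $\bd{M}_\lambda^{-1}\bd{M}_\lambda=\bd{I}$ gives $-\mu\,(\bd{M}_\lambda^{-1})_{11}+\tfrac{1}{3}\alpha_1\,(\bd{M}_\lambda^{-1})_{12}=1$, and reading off the first component of $(\mathbf{1}^{T}\bd{M}_\lambda^{-1})\bd{M}_\lambda=\mathbf{1}^{T}$ gives $-\mu\,\mathbf{1}^{T}\bd{M}_\lambda^{-1}e_1+\tfrac{1}{3}\alpha_1\,\mathbf{1}^{T}\bd{M}_\lambda^{-1}e_2=1$. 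These are exactly the two linear combinations through which the four scalars enter the correction, so the coincidence of $a_2=\tfrac{1}{3}\alpha_1$ with the coupling coefficient $-\tfrac{2}{3}\alpha_1^2$ appearing in $\bd{S}$ forces all remaining $\alpha_1$- and $\dqq$-dependent terms to collapse to the required $\tfrac{4}{3}\lambda\alpha_1^2+2gh\,\alpha_1\dqq$. The identity so obtained holds for all $\lambda$ off the spectrum of $u_m\bd{I}_N+A_2$, hence for all $\lambda$ since both sides are polynomials in $\lambda$. A fully equivalent, possibly lighter, route is a Laplace expansion of $\bd{A}_H-\lambda\bd{I}$ along the (reordered) Exner column, which has only the two nonzero entries $gh$ and $-\lambda$: this reduces the problem to the HSWM characteristic polynomial already established in \cite{koellermeier:2019} together with one extra $(N+2)\times(N+2)$ determinant, itself handled by expansion along its all-$\dqq$ Exner row. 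In either approach the only genuine obstacle is the bookkeeping of the tridiagonal cofactors of $A_2-\mu\bd{I}_N$ and the recombination of the $\dqq$-terms.
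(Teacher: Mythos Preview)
Your proposal is correct and takes a genuinely different route from the paper's own proof.

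The paper proceeds by brute Laplace expansion of $\det(\bd{A}_H-\lambda\bd{I})$ along the first and second rows, producing a linear combination of determinants of auxiliary tridiagonal matrices $A_i$ and of matrices $B_i$ carrying the Exner row $(\dqq,\dots,\dqq)$. It then invokes the recursion $|A_{i+1}|=-\frac{1}{a_{i-1}c_{i-1}}(|A_{i-1}|+\widetilde\lambda\,|A_i|)$ from \cite{koellermeier:2019} and an analogous recursion for $|B_i|$; after substituting these, the coefficients of $|A_3|$ and of $|B_2|$ collapse to zero and only the $|A_2|$ term survives. The cancellation therefore appears only at the very end of a fairly long calculation.

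Your Schur-complement argument is more structural: the permutation isolates the factor $\chi_{A_2}(\lambda-u_m)$ from the outset, and the remaining work is a $3\times3$ determinant. The heart of your proof --- reading the first component of $\bd{M}_\lambda^{-1}\bd{M}_\lambda=\bd{I}$ and of $\mathbf{1}^{T}\bd{M}_\lambda^{-1}\bd{M}_\lambda=\mathbf{1}^{T}$ --- is exactly the right observation: it explains \emph{why} the factorization holds, namely because the coupling entries $(-2u_m\alpha_1,-\tfrac{2}{3}\alpha_1^2)$ in $\bd{S}$ are proportional to $(u_m,a_2)$, so that the combinations $u_mX_{11}+a_2X_{12}$ and $u_my_1+a_2y_2$ reduce via those identities to $1+\lambda X_{11}$ and $1+\lambda y_1$, and the $X_{11}$- and $y_1$-terms then cancel against the corrections in the $(2,2)$ and $(3,2)$ positions. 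Your closing polynomial-continuation step to remove the restriction $\lambda\notin\mathrm{spec}(u_m\bd{I}+A_2)$ is also appropriate. Compared with the paper, your approach is shorter, avoids the $B_i$ recursions entirely, and makes the mechanism of cancellation transparent; the paper's expansion, on the other hand, never needs to invert $\bd{M}_\lambda$ and stays entirely within cofactor arithmetic. Your alternative suggestion (expand along the $b$-column, then along the Exner row) is a third valid route, intermediate between the two.
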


\begin{proof}
    The proof closely follows the proof of the characteristic polynomial of the HSWM system matrix in \cite{koellermeier:2019}, which is extended to include the additional sediment transport. 
    
    We write $\bd{A}_H(\bd{W}) = \widetilde{\bd{A}_H}(\bd{W}) + u_m \bd{I}$ and $\widetilde{\lambda}=\lambda - u_m$, so that we can compute the characteristic polynomial using
    \begin{eqnarray*}
        \chi_{A} (\lambda)  &=& \det \left(\bd{A}_H(\bd{W}) - \lambda \bd{I} \right) \\
        &=& \det \left( \widetilde{\bd{A}_H}(\bd{W}) - (\lambda - u_m) \bd{I} \right)\\
        &=& \det \left( \widetilde{\bd{A}_H}(\bd{W}) - \widetilde{\lambda} \bd{I} \right)\\
        &=& \left| \widetilde{\bd{A}_H}(\bd{W}) - \widetilde{\lambda} \bd{I} \right|.
    \end{eqnarray*}

    When writing $\widetilde{\bd{A}_H}(\bd{W})$ we use the following notation for conciseness:
    \begin{equation*}
    \begin{array}{c}
        d_1 = g h-u_m^2 - \frac{1}{3} \alpha_1^2,\quad d_2 = \frac{2}{3}\alpha_1,\quad d_3 = -2 u_m \alpha_1,\\[3mm]
        d_4 = 2 \alpha_1, \quad d_5 = -\frac{2}{3}\alpha_1^2,\quad d_6 = gh,\quad a_2 = \frac{1}{3}\alpha_1.
        \end{array}
    \end{equation*}
    
    We now compute the determinant $\left| \widetilde{\bd{A}_H}(\bd{W}) - \widetilde{\lambda} \bd{I} \right|$ by developing with respect to the first row and the second row, subsequently.
    \begin{eqnarray*}
    && \left| \widetilde{\bd{A}_H}(\bd{W}) - \widetilde{\lambda} \bd{I} \right| = \left|
        \begin{array}{ccccccc}
        -\widetilde{\lambda} - u_m  & 1 &  &  &  & &\\
        d_1 & -\widetilde{\lambda} + u_m & d_2 &  & & &d_6\\
         d_3 & d_4 & -\widetilde{\lambda} & c_2  &  &  & \\
         d_5 &  & a_2 & -\widetilde{\lambda} & \ddots  &  & \\
        &  &  & \ddots & \ddots & c_N &\\
         &  &  &  & a_N & -\widetilde{\lambda} & \\
        \dqh & \dqq & \dqq & \hdots & \dqq & \dqq & -\widetilde{\lambda} - u_m\\
        \end{array}
        \right| \\
      &=& (-\widetilde{\lambda} - u_m)
      \left|
        \begin{array}{cccccc}
         -\widetilde{\lambda} + u_m & d_2 &  & & &d_6\\
          d_4 & -\widetilde{\lambda} & c_2  &  & &  \\
           & a_2 & -\widetilde{\lambda} & \ddots  & & \\
        &  & \ddots & \ddots & c_N &\\
         &  &  & a_N & -\widetilde{\lambda}  & \\
        \dqq & \dqq & \hdots & \dqq & \dqq & -\widetilde{\lambda} - u_m\\
        \end{array}
        \right|
        -1 \left|
        \begin{array}{cccccc}
        d_1 & d_2 &  &  & & d_6\\
         d_3 &  -\widetilde{\lambda} & c_2  &  &  & \\
         d_5 & a_2 & -\widetilde{\lambda} & \ddots  & &  \\
        &   & \ddots & \ddots & c_N &\\
         &   &  & a_N & -\widetilde{\lambda}  & \\
        \dqh & \dqq & \hdots & \dqq & \dqq & -\widetilde{\lambda} - u_m\\
        \end{array}
        \right| \\
        &=& (\widetilde{\lambda}^2 - u_m^2) \left|
        \begin{array}{ccccc}
           -\widetilde{\lambda} & c_2  &  &  & \\
          a_2 & -\widetilde{\lambda} & \ddots  & &  \\
       & \ddots & \ddots & c_N &\\
         &  & a_N & -\widetilde{\lambda} &  \\
        \dqq & \hdots & \dqq & \dqq & -\widetilde{\lambda} - u_m\\
        \end{array}
        \right| + (\widetilde{\lambda} + u_m) d_2  \left|
        \begin{array}{cccccc}
          d_4 & c_2  &  & & & \\
           &  -\widetilde{\lambda} & c_3  & & & \\
           &  a_3 & -\widetilde{\lambda}  & \ddots & & \\
        & &\ddots & \ddots & c_N &\\
         &   & & a_N & -\widetilde{\lambda} &  \\
        \dqq & \hdots & & \dqq & \dqq & -\widetilde{\lambda} - u_m\\
        \end{array}
        \right|  \\
        && - d_1  \left|
        \begin{array}{ccccc}
           -\widetilde{\lambda} & c_2  &  &  & \\
          a_2 & -\widetilde{\lambda} & \ddots  & &  \\
        & \ddots & \ddots & c_N &\\
         &  & a_N & -\widetilde{\lambda} &  \\
        \dqq & \hdots & \dqq & \dqq & -\widetilde{\lambda} - u_m\\
        \end{array}
        \right| + d_2  \left|
        \begin{array}{cccccc}
         d_3  & c_2  &  &  & &\\
         d_5  & -\widetilde{\lambda} & c_3  &  & &\\
         & a_3 & -\widetilde{\lambda} & \ddots & & \\
        &  &\ddots & \ddots & c_N &\\
         &   & &a_N & -\widetilde{\lambda} &  \\
        \dqh & \dqq & \hdots & \dqq & \dqq & -\widetilde{\lambda} - u_m\\
        \end{array}
        \right| \\
        && +(-\widetilde{\lambda} - u_m) (-1)^{N+3} d_6\left|
        \begin{array}{ccccc}
          d_4 & -\widetilde{\lambda}  & c_2 & &  \\
           &  a_2 & -\widetilde{\lambda}  & \ddots&  \\
        & & \ddots & \ddots & c_N \\
         & &  & a_N & -\widetilde{\lambda}  \\
        \dqq & \hdots & \dqq & \dqq & \dqq \\
        \end{array}
        \right| - (-1)^{N+3} d_6 \left|
        \begin{array}{ccccc}
         d_3  & -\widetilde{\lambda}  & c_2 &  &\\
         d_5  & a_2 & -\widetilde{\lambda} & \ddots  &\\
        &  &\ddots & \ddots & c_N\\
         &   & &a_N & -\widetilde{\lambda}  \\
        \dqh & \dqq & \hdots & \dqq  & -\widetilde{\lambda} - u_m\\
        \end{array}
        \right| \\
        &=& (-\widetilde{\lambda} - u_m)\big[(-\widetilde{\lambda} - u_m)  \left( (-\widetilde{\lambda} + u_m)  \left|A_2\right| - d_2 d_4  \left|A_3\right| \right) - d_1  \left|A_2\right| + d_2 \left( d_3  \left|A_3\right|  - c_2 d_5 \left|A_4\right| \right) \big]\\
        && +(-\widetilde{\lambda} - u_m) (-1)^{N+3} d_6 \left|
        \begin{array}{ccccc}
          d_4 & -\widetilde{\lambda}  & c_2 & &  \\
           &  a_2 & -\widetilde{\lambda}  & \ddots&  \\
        & & \ddots & \ddots & c_N \\
         & &  & a_N & -\widetilde{\lambda}  \\
        \dqq & \hdots & \dqq & \dqq & \dqq \\
        \end{array}
        \right| - (-1)^{N+3} d_6 \left|
        \begin{array}{ccccc}
         d_3  & -\widetilde{\lambda}  & c_2 &  &\\
         d_5  & a_2 & -\widetilde{\lambda} & \ddots  &\\
        &  &\ddots & \ddots & c_N\\
         &   & &a_N & -\widetilde{\lambda}  \\
        \dqh & \dqq & \hdots & \dqq  & -\widetilde{\lambda} - u_m\\
        \end{array}
        \right|
    \end{eqnarray*}
    
    where $|A_i|=\chi_{A_i} (\lambda - u_m)$, for matrix $A_i$ given by 
    \begin{equation}
        A_{i}=\left(
        \begin{array}{ccccc}
            & c_i   &       & \\
        a_i &       & \ddots& \\
            & \ddots&       & c_N  \\
            &       & a_N   &  \\
        \end{array}
        \right) \in \mathbb{R}^{N+2-i},
    \end{equation}
    
    Now the determinants of the last two matrices containing the sediment transport equation are computed separately. Firstly,
    \begin{eqnarray*}
        \left|
        \begin{array}{ccccc}
          d_4 & -\widetilde{\lambda}  & c_2 & &  \\
           &  a_2 & -\widetilde{\lambda}  & \ddots&  \\
        & & \ddots & \ddots & c_N \\
         & &  & a_N & -\widetilde{\lambda}  \\
        \dqq & \hdots & \dqq & \dqq & \dqq \\
        \end{array}
        \right| &=& d_4 \left|
        \begin{array}{ccccc}
            a_2 & -\widetilde{\lambda}  & c_3 & & \\
        &  a_3 & -\widetilde{\lambda} & \ddots &  \\
        &  & \ddots & \ddots & c_N \\
         & &  & a_N & -\widetilde{\lambda}  \\
        \dqq & \hdots & \dqq & \dqq & \dqq \\
        \end{array}
        \right| + (-1)^{N+2} b \left|
        \begin{array}{cccc}
          -\widetilde{\lambda}  & c_2 & & \\
            a_2 & -\widetilde{\lambda}  & \ddots & \\
        & \ddots & \ddots & c_N \\
         &  & a_N & -\widetilde{\lambda}  \\
        \end{array}
        \right| \\[4mm]
        &=& d_4 \left|B_2\right| + (-1)^{N+2} \dqq \left|A_2\right|,
    \end{eqnarray*}
    
    for matrix $B_i$ given by 
    \begin{equation}
        B_{i}=\left(
        \begin{array}{ccccc}
            a_i & -\widetilde{\lambda}  & c_{i+1} & & \\
        &  a_{i+1} & -\widetilde{\lambda} & \ddots &  \\
        &  & \ddots & \ddots & c_N \\
         & &  & a_N & -\widetilde{\lambda}  \\
        \dqq & \hdots & \dqq & \dqq & \dqq \\
        \end{array}
        \right) \in \mathbb{R}^{N+2-i}.
    \end{equation}
    
    And secondly,
    \begin{eqnarray*}
        \left|
        \begin{array}{ccccc}
         d_3  & -\widetilde{\lambda}  & c_2 &  &\\
         d_5  & a_2 & -\widetilde{\lambda} & \ddots  &\\
        &  &\ddots & \ddots & c_N\\
         &   & &a_N & -\widetilde{\lambda}  \\
        \dqh & \dqq & \hdots & \dqq  & -\widetilde{\lambda} - u_m\\
        \end{array}
        \right| &=& d_3 \left|
        \begin{array}{ccccc}
            a_2 & -\widetilde{\lambda}  & c_3 & & \\
        &  a_3 & -\widetilde{\lambda} & \ddots &  \\
        &  & \ddots & \ddots & c_N \\
         & &  & a_N & -\widetilde{\lambda}  \\
        \dqq & \hdots & \dqq & \dqq & \dqq \\
        \end{array}
        \right| - d_5 \left|
        \begin{array}{cccccc}
          -\widetilde{\lambda}  & c_2 &  & & &\\
           & a_3 & -\widetilde{\lambda} & c_4  &\\
        &  & a_4 & -\widetilde{\lambda} & \ddots & \\
        &  &  & \ddots & \ddots & c_N\\
         &   & & & a_N & -\widetilde{\lambda}  \\
         \dqq & \hdots & \dqq &\dqq &\dqq & -\widetilde{\lambda} - u_m\\
        \end{array}
        \right| \\[4mm]
        && + (-1)^{N+2} \dqh \left|A_2\right|
        \\[4mm]
        &=& d_3 \left|B_2\right| -d_5 \left( -\widetilde{\lambda} \left|B_3\right| + (-1)^{N+1} \dqq c_2 \left|A_4\right| \right) + (-1)^{N+2} \dqh \left|A_2\right|.
    \end{eqnarray*}
    
    From \cite{koellermeier:2019}, we use the following recursion rule for $\left|A_{i+1}\right|$:
    \begin{equation}
        \left|A_{i+1}\right| = -\frac{1}{a_{i-1}c_{i-1}} \left( \left|A_{i-1}\right| + \widetilde{\lambda} \left|A_i\right| \right)
    \end{equation}
    and we derive analogously 
    \begin{equation}
        \left|B_{i+1}\right| = \frac{1}{a_{i}} \left( \left|B_{i}\right| + (-1)^{N+4-i} \dqq  \left|A_{i+1}\right| \right),
    \end{equation}
    which we will use for $i=4$ to substitute $\left|A_{4}\right|$ and for $i=3$ to substitute $\left|B_{3}\right|$.
    
    After insertion of the recursion rules, the characteristic polynomial reads
    \begin{eqnarray*}
         \left| \widetilde{\bd{A}_H}(\bd{W}) - \widetilde{\lambda} \bd{I} \right| &=& \left|A_{2}\right| \cdot \left[ (-\widetilde{\lambda} - u_m)  \left( -\widetilde{\lambda}^2 - u_m^2 -d_1 + \frac{d_2 d_5}{a_2} \right) - d_6 \left( (\widetilde{\lambda} + u_m) b - \frac{d_5}{a_2} \dqq + \dqh \right) \right] \\
         && + \left|A_{3}\right| \cdot \left[ (-\widetilde{\lambda} - u_m)  \left( (\widetilde{\lambda} + u_m) d_2 d_4 + d_2 d_3 + \frac{d_2 d_5}{a_2} \widetilde{\lambda} \right) \right]\\
         && + \left|B_{2}\right| \cdot \left[ (-\widetilde{\lambda} - u_m) d_6 d_4 (-1)^{N+3} + d_6 d_3 (-1)^{N+4} + d_6 \frac{d_5}{a_2} \widetilde{\lambda} (-1)^{N+4}\right].\\
    \end{eqnarray*}
    
    Insertion of the entries $d_1,\ldots,d_6$ and $a_2$ then yields the surprisingly simple result
    \begin{eqnarray*}
         \left| \widetilde{\bd{A}_H}(\bd{W}) - \widetilde{\lambda} \bd{I} \right| &=& \left|A_{2}\right| \cdot \left[ (-\widetilde{\lambda} - u_m)  \left( \widetilde{\lambda}^2 - gh -\alpha_1^2 \right) + gh \left( (\widetilde{\lambda} + u_m) \dqq +2\alpha_1 \dqq + \dqh \right) \right] \\
         && + \left|A_{3}\right| \cdot 0 + \left|B_{2}\right| \cdot 0.\\
    \end{eqnarray*}    
    
   Going back to the standard notation with $\widetilde{\lambda}=\lambda - u_m$, we finally have
    \begin{equation*}
        \chi_{A} (\lambda) = \left[(-\lambda)\left( \left(\lambda - u_m\right)^2 -gh -\alpha_1^2 \right) + gh (\dqh + \lambda \dqq + 2\alpha_1 \dqq) \right] \cdot \chi_{A_2} (\lambda - u_m),
    \end{equation*}
    which completes the proof.    
\end{proof}

\bibliographystyle{plain}

\bibliography{Biblio}

\begin{thebibliography}{10}

\bibitem{ashida:1972}
K.~Ashida and M.~Michiue.
\newblock Study on hydraulic resistance and bed-load transport rate in alluvial
  streams.
\newblock {\em Proceedings of the Japan Society of Civil Engineers},
  1972(206):59--69, 1972.

\bibitem{audusse2010multilayer}
E.~Audusse, {M.-O.} Bristeau, B.~Perthame, and J.~{Sainte-Marie}.
\newblock A multilayer {Saint-Venant} system with mass exchanges for shallow
  water flows. derivation and numerical validation.
\newblock {\em {ESAIM:} Mathematical Modelling and Numerical Analysis},
  45(1):169--200, 2010.

\bibitem{bonaventura:2018}
L.~Bonaventura, E.~D. Fern{\'a}ndez-Nieto, J.~Garres-D{\'\i}az, and
  G.~Narbona-Reina.
\newblock Multilayer shallow water models with locally variable number of
  layers and semi-implicit time discretization.
\newblock 364:209--234, 2018.

\bibitem{Cai:2013}
Z.~Cai, Y.~Fan, and R.~Li.
\newblock Globally hyperbolic regularization of {G}rad's moment system in one
  dimensional space.
\newblock {\em Commun. Math. Sci.}, 11(2):547--571, 2013.

\bibitem{caleffi}
V.~Caleffi, A.~Valiani, and A.~Bernini.
\newblock High-order balanced cweno scheme for movable bed shallow water
  equations.
\newblock {\em Advances in Water Resources}, 30(4):730 -- 741, 2007.

\bibitem{canestrelli}
A.~Canestrelli, A.~Siviglia, M.~Dumbser, and E.~F. Toro.
\newblock Well-balanced high-order centred schemes for non-conservative
  hyperbolic systems. applications to shallow water equations with fixed and
  mobile bed.
\newblock {\em Advances in Water Resources}, 32(6):834 -- 844, 2009.

\bibitem{castro2017well}
M.~J. Castro, T.~Morales~de Luna, and C.~Par{\'e}s.
\newblock Well-{Balanced} {Schemes} and {Path}-{Conservative} {Numerical}
  {Methods}.
\newblock In R\'emi Abgrall {and} Chi-Wang Shu, editor, {\em Handbook of
  {Numerical} {Analysis}}, volume~18 of {\em Handbook of {Numerical} {Methods}
  for {Hyperbolic} {Problems Applied} and {Modern} {Issues}}, pages 131--175.
  Elsevier, 2017.
\newblock DOI: 10.1016/bs.hna.2016.10.002.

\bibitem{castro:2012}
M.~J. Castro~D{\'\i}az and E.~D. Fern{\'a}ndez-Nieto.
\newblock A class of computationally fast first order finite volume solvers:
  {PVM} methods.
\newblock {\em SIAM Journal on Scientific Computing}, 34(4):A2173--A2196, 2012.

\bibitem{castrodiaz2008sediment}
{M.~J.} Castro~D{\'ia}z, {E.~D.} Fern{\'a}ndez-Nieto, and {A.~M.} Ferreiro.
\newblock Sediment transport models in {S}hallow {W}ater equations and
  numerical approach by high order finite volume methods.
\newblock {\em Computers $\&$ Fluids}, 37(3):299--316, March 2008.

\bibitem{chertock:15}
A.~Chertock, S.~Cui, A.~Kurganov, and T.~Wu.
\newblock Steady {S}tate and {S}ign {P}reserving {S}emi-{I}mplicit
  {R}unge--{K}utta {M}ethods for {ODE}s with {S}tiff {D}amping {T}erm.
\newblock {\em SIAM Journal on Numerical Analysis}, 53:2008--2029, 01 2015.

\bibitem{cordier:2011}
S.~Cordier, M.~H. Le, and T.~Morales de~Luna.
\newblock Bedload transport in shallow water models: Why splitting (may) fail,
  how hyperbolicity (can) help.
\newblock {\em Advances in Water Resources}, 34(8):980--989, aug 2011.

\bibitem{cravero2019optimal}
I.~Cravero, M.~Semplice, and G.~Visconti.
\newblock Optimal definition of the nonlinear weights in multidimensional
  central {WENOZ} reconstructions.
\newblock {\em {SIAM} Journal on Numerical Analysis}, 57(5):2328--2358, jan
  2019.

\bibitem{cunge_practical_1980}
J.~A. Cunge, F.~M. Holly, and A.~Verwey.
\newblock Practical aspects of computational river hydraulics.
\newblock 1980.
\newblock OCLC: 993288090.

\bibitem{de_vriend_medium-term_1993}
H.~J. de~Vriend, J.~Zyserman, J.~Nicholson, J.~A. Roelvink, P.~Péchon, and
  H.~N. Southgate.
\newblock Medium-term {2DH} coastal area modelling.
\newblock {\em Coastal Engineering}, 21(1):193--224, December 1993.

\bibitem{exner:1925}
F.~Exner.
\newblock Uber die wechselwirkung zwischen wasser und geschiebe in flussen.
\newblock {\em Sitzungs- ber., Akad. Wissenschaften pt. IIa. Bd. 134}, 1925.

\bibitem{Fan:2020}
Y.~Fan and J.~Koellermeier.
\newblock Accelerating the convergence of the moment method for the boltzmann
  equation using filters.
\newblock {\em Journal of Scientific Computing}, 84, 2020.

\bibitem{fernandezNieto:2016}
E.~D. Fern\'andez-Nieto, T.~Morales de~Luna, G.~Narbona-Reina, and J.~D.
  Zabsonr\'e.
\newblock Formal deduction of the saint-venant{\textendash}exner model
  including arbitrarily sloping sediment beds and associated energy.
\newblock {\em {ESAIM}: Mathematical Modelling and Numerical Analysis},
  51(1):115--145, nov 2016.

\bibitem{fernandezNieto:2011}
E.~D. Fern{\'{a}}ndez-Nieto, M.~J.~Castro D{\'{\i}}az, and C.~Par{\'{e}}s.
\newblock On an intermediate field capturing riemann solver based on a
  parabolic viscosity matrix for the two-layer shallow water system.
\newblock {\em Journal of Scientific Computing}, 48(1-3):117--140, jan 2011.

\bibitem{fernandez-nieto2013multilayer}
E.~D. Fern{\'a}ndez-Nieto, E.~H. Kon{\'e}, T.~Morales~de Luna, and
  R.~B{\"u}rger.
\newblock A multilayer shallow water system for polydisperse sedimentation.
\newblock {\em Journal of Computational Physics}, 238:281--314, April 2013.

\bibitem{fernandez-nieto2017formal}
E.~D. Fern\'andez-Nieto, T.~Morales~de Luna, G.~Narbona-Reina, and J.~D.
  Zabsonr\'e.
\newblock Formal deduction of the {Saint}-{Venant}-{Exner} model including
  arbitrarily sloping sediment beds and associated energy.
\newblock {\em ESAIM: Mathematical Modelling and Numerical Analysis},
  51(1):115--145, January 2017.

\bibitem{garcia_simulacion_2000}
M.~Pilar~Brufau Garc\'ia.
\newblock Simulación bidimensional de flujos hidrodinámicos transitorios en
  geometrías irregulares.
\newblock 2000.

\bibitem{gerbeau:2001}
J.-F. Gerbeau and B.~Perthame.
\newblock Derivation of viscous saint-venant system for laminar shallow water;
  numerical validation.
\newblock {\em Discrete and Continuous Dynamical Systems - Series B},
  1(1):89--102, 2001.

\bibitem{gonzalezAguirre:2020}
J.~C. Gonz\'alez-Aguirre, M.~J. Castro, and T.~Morales de~Luna.
\newblock A robust model for rapidly varying flows over movable bottom with
  suspended and bedload transport: {M}odelling and numerical approach.
\newblock {\em Advances in {W}ater {R}esources}, 140:103575, 2020.

\bibitem{grass:1981}
A.~J. Grass.
\newblock Sediment transport by waves and currents.
\newblock 1981.

\bibitem{juez:2013}
C.~Juez, J.~Murillo, and P.~Garc\'ia-Navarro.
\newblock Numerical assessment of bed-load discharge formulations for transient
  flow in 1d and 2d situations.
\newblock {\em Journal of Hydroinformatics}, 15(4):1234--1257, mar 2013.

\bibitem{juez}
C.~Juez, J.~Murillo, and P.~Garc\'ia-Navarro.
\newblock A {2D} weakly-coupled and efficient numerical model for transient
  shallow flow and movable bed.
\newblock {\em Advances in Water Resources}, 71:93 -- 109, 2014.

\bibitem{Koellermeier:2017}
J.~Koellermeier.
\newblock {\em {D}erivation and numerical solution of hyperbolic moment
  equations for rarefied gas flows}.
\newblock Dissertation, RWTH Aachen University, Aachen, 2017.

\bibitem{koellermeier:2019}
J.~Koellermeier and M.~Rominger.
\newblock Analysis and numerical simulation of hyperbolic shallow water moment
  equations.
\newblock {\em Communications in Computational Physics, in press}, 2020.

\bibitem{Koellermeier:2014}
J.~Koellermeier, R.~P. Schaerer, and M.~Torrilhon.
\newblock A framework for hyperbolic approximation of kinetic equations using
  quadrature-based projection methods.
\newblock {\em Kinetic and Related Models}, 7(3):531--549, 2014.

\bibitem{kowalski:2018}
J.~Kowalski and M.~Torrilhon.
\newblock Moment approximations and model cascades for shallow flow.
\newblock {\em Communications in Computational Physics}, 25(3):669--702, 2018.

\bibitem{kubatko:2006}
E.~J. Kubatko, J.~J. Westerink, and C.~Dawson.
\newblock An unstructured grid morphodynamic model with a discontinuous
  galerkin method for bed evolution.
\newblock {\em Ocean Modelling}, 15(1-2):71--89, jan 2006.

\bibitem{liu3}
X.~Liu, B.~J. Landry, and M.~H. Garc\'ia.
\newblock Two-dimensional scour simulations based on coupled model of shallow
  water equations and sediment transport on unstructured meshes.
\newblock {\em Coastal Engineering}, 55(10):800 -- 810, 2008.

\bibitem{mangeneycastelnau_numerical_2003}
A.~Mangeney-Castelnau, J.-P. Vilotte, M.~O. Bristeau, B.~Perthame, F.~Bouchut,
  C.~Simeoni, and S.~Yerneni.
\newblock Numerical modeling of avalanches based on {Saint} {Venant} equations
  using a kinetic scheme.
\newblock {\em Journal of Geophysical Research: Solid Earth}, 108(B11), 2003.
\newblock \_eprint:
  https://agupubs.onlinelibrary.wiley.com/doi/pdf/10.1029/2002JB002024.

\bibitem{meyerPeter:1948}
E.~Meyer-Peter and R.~M\"uller.
\newblock Formulas for bed-load transport.
\newblock {\em Rep. 2nd Meet. Int. Assoc. Hydraul. Struct. Res., Stockolm},
  page 39–64, 1948.

\bibitem{moralesdeluna2017derivation}
T.~Morales~de Luna, E.~D. Fern\'{a}ndez~Nieto, and M.~J. Castro~D\'{i}az.
\newblock Derivation of a multilayer approach to model suspended sediment
  transport: Application to hyperpycnal and hypopycnal plumes.
\newblock {\em Communications in Computational Physics}, 22(5):1439--1485,
  2017.

\bibitem{murillo2010}
J.~Murillo and P.~Garc\'ia-Navarro.
\newblock An exner-based coupled model for two-dimensional transient flow over
  erodible bed.
\newblock {\em Journal of Computational Physics}, 229(23):8704 -- 8732, 2010.

\bibitem{pares:2006}
C.~Par{\'e}s.
\newblock Numerical methods for nonconservative hyperbolic systems: a
  theoretical framework.
\newblock {\em SIAM Journal on Numerical Analysis}, 44(1):300--321, 2006.

\bibitem{pares:2004}
C.~Par{\'e}s and M.~Castro.
\newblock On the well-balance property of roe's method for nonconservative
  hyperbolic systems. applications to shallow-water systems.
\newblock {\em ESAIM: mathematical modelling and numerical analysis},
  38(5):821--852, 2004.

\bibitem{spinewine:2007}
B.~Spinewine and Y.~Zech.
\newblock Small-scale laboratory dam-break waves on movable beds.
\newblock {\em Journal of Hydraulic Research}, 45(sup1):73--86, dec 2007.

\bibitem{librosed}
D.~Subhasish.
\newblock {\em Fluvial Hydrodynamics, Hydrodynamic and Sediment Transport
  Phenomena}.
\newblock GeoPlanet: Earth and Planetary Sciences, Springer-Verlag Berlin
  Heidelberg, 2014.

\bibitem{vanrijn}
L.~C. Van~Rijn.
\newblock Sediment transport, part {I}: Bed load transport.
\newblock {\em Journal of Hydraulic Engineering}, 110(10):1431--1456, 1984.

\end{thebibliography}
\end{document}